\documentclass[journal]{IEEEtran}
\usepackage{amsmath}
\usepackage{amssymb}
\usepackage{amsfonts}
\usepackage{graphicx}
\usepackage{epsfig}
\usepackage{subfigure}
\usepackage{psfrag}
\usepackage{cite}
\usepackage{latexsym}
\usepackage{url}
\usepackage{color}
\PassOptionsToPackage{bookmarks={false}}{hyperref}

\begin{document}
\title{Proactive Eavesdropping via Cognitive Jamming in Fading Channels}
\author{Jie Xu, Lingjie Duan, and Rui Zhang
\thanks{Part of this paper was presented in the IEEE International Conference on Communications (ICC), Kuala Lumpur, Malaysia, May 23-27, 2016.}
\thanks{J. Xu is with the School of Information Engineering, Guangdong University of Technology (e-mail: jiexu.ustc@gmail.com). He was with the Engineering Systems and Design Pillar, Singapore University of Technology and Design.}
\thanks{L. Duan is with the Engineering Systems and Design Pillar, Singapore University of Technology and Design (e-mail:~lingjie\_duan@sutd.edu.sg). L. Duan is the corresponding author.}
\thanks{R. Zhang is with the Department of Electrical and Computer Engineering, National University of Singapore (e-mail: elezhang@nus.edu.sg). He is also with the Institute for Infocomm Research, A*STAR, Singapore.}}

\maketitle

\begin{abstract}
To enhance the national security, there is a growing need for authorized parties to legitimately monitor suspicious communication links for preventing intended crimes and terror attacks. In this paper, we propose a new wireless information surveillance paradigm by investigating a scenario where a legitimate monitor aims to intercept a suspicious wireless link over fading channels. The legitimate monitor can successfully eavesdrop (decode) the information of the suspicious link at each fading state only when its achievable data rate is no smaller than that at the suspicious receiver. We propose a new approach, namely proactive eavesdropping via cognitive jamming, in which the legitimate monitor purposely jams the receiver in a full-duplex mode so as to change the suspicious communication (e.g., to a smaller data rate) for overhearing more efficiently. By assuming perfect self-interference (SI) cancelation (SIC) and global channel state information (CSI) at the legitimate monitor, we characterize the fundamental information-theoretic limits of proactive eavesdropping. We consider both delay-sensitive and delay-tolerant applications for the suspicious communication, under which the legitimate monitor maximizes the eavesdropping non-outage probability (for event-based monitoring) and the relative eavesdropping rate (for content analysis), respectively, by optimizing the jamming power allocation over different fading states subject to an average power constraint. Numerical results show that the proposed proactive eavesdropping via cognitive jamming approach greatly outperforms other benchmark schemes. Furthermore, by extending to a more practical scenario with residual SI and local CSI, we design an efficient {\emph{online}} cognitive jamming scheme inspired by the optimal cognitive jamming with perfect SIC and global CSI.
\end{abstract}

\begin{keywords}
Wireless information surveillance, proactive eavesdropping, cognitive jamming, power allocation, full-duplex radio.
\end{keywords}

\newtheorem{definition}{\underline{Definition}}[section]
\newtheorem{fact}{Fact}
\newtheorem{assumption}{Assumption}
\newtheorem{theorem}{\underline{Theorem}}[section]
\newtheorem{lemma}{\underline{Lemma}}[section]
\newtheorem{corollary}{\underline{Corollary}}[section]
\newtheorem{proposition}{\underline{Proposition}}[section]
\newtheorem{example}{\underline{Example}}[section]
\newtheorem{remark}{\underline{Remark}}[section]
\newtheorem{algorithm}{\underline{Algorithm}}[section]
\newcommand{\mv}[1]{\mbox{\boldmath{$ #1 $}}}

\section{Introduction}

Recently, wireless security has attracted a lot of attentions from both academia and industry, and various approaches have been adopted to enhance the security of wireless networks among different layers of communication protocols \cite{ZouWangHanzo2015}. Among others, physical layer security techniques have been proposed as promising solutions to achieve perfect wireless secrecy against malicious eavesdropping attacks, and there are extensive studies in the literature investigating physical layer security techniques under different system setups (see, e.g., \cite{Wyner1975,GopalaLaiGamal2008,BlochBarrosRodriguesMcLaughlin2008,LiangPoorShamai2008} and the references therein). These existing works focus on preserving the confidentiality of wireless communications by assuming communication users to be rightful and viewing the information eavesdropping as malicious attacks. However, from a broader national security perspective, they overlook the possibility that communication links can also be used by criminals or terrorists and the resultant problems for information surveillance.

\begin{figure*}
\centering
 \epsfxsize=1\linewidth
    \includegraphics[width=15cm]{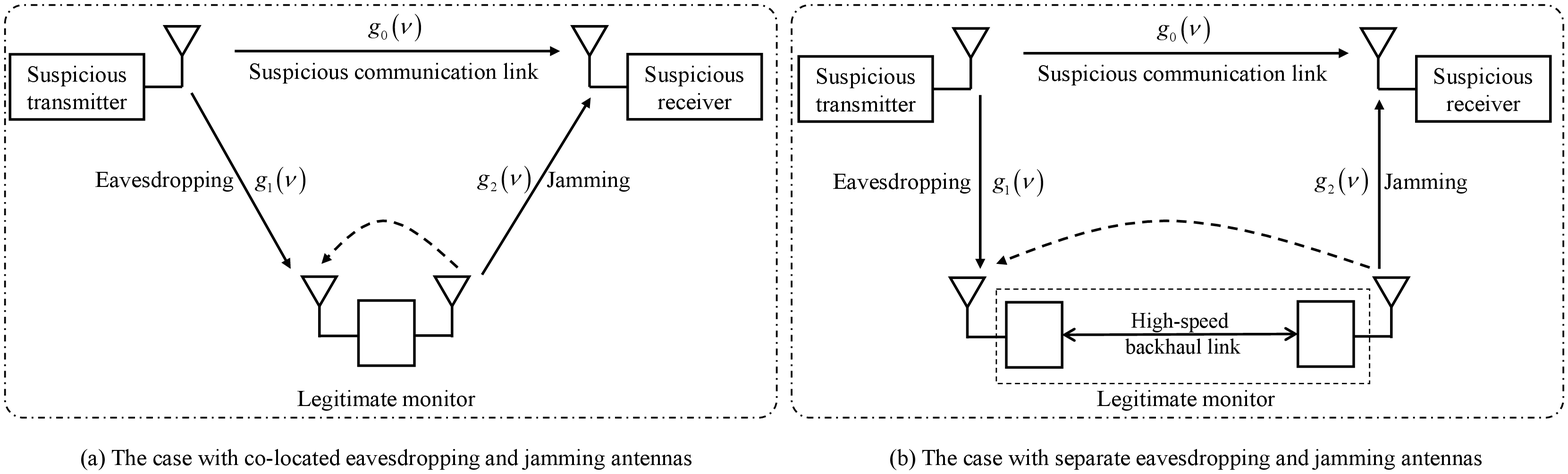}
\caption{An information surveillance scenario where a legitimate monitor proactively eavesdrops the suspicious communication from a transmitter to a receiver via cognitive jamming.} \label{fig:0}
\end{figure*}

With recent advancements in wireless technologies, many infrastructure-free wireless communication links are established for various applications. For example, smartphones in proximity can enable peer-to-peer data connections via Wi-Fi and bluetooth without Internet infrastructures,{\footnote{For example, FireChat is a mobile chatting software that allows nearby users to interconnect in a mobile ad hoc network by using Wi-Fi and/or Bluetooth locally (see {\url{https://www.technologyreview.com/s/525921/the-latest-chat-app-for-iphone-needs-no-internet-connection/}).}}} or via device-to-device (D2D) communications in the fifth-generation (5G) cellular networks without going through cellular infrastructures. Unmanned aerial vehicles (UAVs) can be employed as mobile relays to assist information exchange between ground users \cite{UAV1,UAV2,UAV3}. These emerging infrastructure-free wireless communications, however, can be used by criminals or terrorists to commit crimes or terror attacks. For instance, terrorists can use them to share information on a public transportation (e.g., in a plane) to facilitate hijacking or bombing activities, and undercover spies inside an isolated innovative enterprise can use them to send out the secret business data to outside peers. Since these communications do not go through any core infrastructures, they are difficult to be monitored by conventional surveillance approaches that intercept the communication data at the Internet backbones or cellular central offices.{\footnote{Note that the conventional approaches are used in the Terrorist Surveillance Program for legitimate information surveillance launched by the National Security Agency (NSA) of the United States \cite{TerroristWiKi}.}} As a result, there is a growing need for authorized parties (such as government agencies) to develop new wireless information surveillance approaches to legitimately monitor these infrastructure-free suspicious communication links (see, e.g., \cite{WireComMag,WCL,JSTSP,Spoofing1,Spoofing2}). These new wireless information surveillance approaches are also expected to be implemented to monitor infrastructure-based wireless communications in real time as a supplement of conventional Internet backbone surveillance.

To cope with the increasing information monitoring needs in wireless security, in this paper we propose a paradigm shift from the conventional physical layer security against {\it illegitimate eavesdropping} to the new information surveillance by exploiting {\it legitimate eavesdropping}. In particular, we consider a wireless scenario as shown in Fig. \ref{fig:0}, where a legitimate monitor aims to intercept a suspicious communication link from a transmitter to a receiver over fading channels.{\footnote{We assume that the suspicious transmitter and receiver have been detected {\it a priori} by authorized parties, and a legitimate monitor is assigned to monitor them accordingly. How to detect suspicious users and associate the suspicious users with the legitimate monitor can be referred to in \cite{WireComMag}.} Under this setup, the legitimate monitor can successfully eavesdrop (decode) the suspicious communication only when the received signal-to-noise ratio (SNR) (and accordingly the achievable data rate) at the legitimate monitor is no smaller than that at the suspicious receiver, since in this case the legitimate monitor is able to decode the whole information that can be decoded at the suspicious receiver.{\footnote{For the purpose of initial investigation, here we assume that the suspicious communication does not employ advanced anti-eavesdropping schemes such as the physical-layer security techniques.}} In practice, such legitimate eavesdropping is particularly challenging, since the legitimate monitor may be far away from the suspicious transmitter and cannot eavesdrop efficiently. This motivates us to design new methods to improve the legitimate eavesdropping performance in this work.

We propose a proactive eavesdropping via cognitive jamming approach (see Fig. \ref{fig:0}), in which the legitimate monitor operates in a full-duplex mode, and purposely sends jamming signals to interfere with the suspicious link, so as to decrease the achievable data rate at the suspicious receiver for overhearing more efficiently. For such a full-duplex legitimate monitor, its eavesdropping and jamming antennas can either be co-located or separately located, as shown in Figs. \ref{fig:0}-(a) and \ref{fig:0}-(b), respectively. The co-located structure can facilitate the joint design of eavesdropping and jamming, but may lead to severe self-interference (SI) from the jamming to the eavesdropping antennas. Due to the finite dynamic range of practical analog-to-digital converter (ADC), such SI is difficult to be cancelled perfectly, although it is recently reported that advanced analog and digital SIC schemes are able to achieve up to 110 dB SI reduction \cite{fullduplex}. In contrast, although the separate structure requires an extra low-latency backhaul link to connect the eavesdropping and jamming antennas to enable their joint operation, it effectively alleviates the SI problem by extending the distance between the transmitting/receiver antennas. Furthermore, the separate structure may have better eavesdropping and jamming performances by distributing the corresponding antennas in proximity of the suspicious transmitter and receiver, respectively. Also, it is more resilient to the anti-eavesdropping of the suspicious transmitter, since the separately located eavesdropping antenna is less  susceptible to get exposed. For both co-located and separate structures, to maximize the effectiveness of jamming for eavesdropping, it is important for the legitimate monitor to cognitively control the jamming power according to different fading states under its limited jamming power constraint. The main results of this paper are summarized as follows.

First, by assuming perfect SI cancelation (SIC) and global channel state information (CSI) at the legitimate monitor, we characterize the fundamental information-theoretic performance limits of proactive eavesdropping. In particular, we consider two different applications (i.e., delay-sensitive and delay-tolerant applications) for the suspicious communication, under which the legitimate monitor is interested in maximizing the eavesdropping non-outage probability and the relative eavesdropping rate (to the suspicious link's rate), respectively. Accordingly, we formulate two optimization problems for the legitimate monitor, by optimizing its jamming power allocation over different fading states subject to an average power constraint.

For the delay-sensitive applications, the eavesdropping non-outage probability maximization problem is shown to be irrespective of the transmit power allocation strategies at the suspicious transmitter, and we obtain the optimal cognitive jamming solution via the Lagrangian duality method. It is shown that the legitimate monitor jams only over the desired fading states of successful eavesdropping. For the delay-tolerant applications, the relative eavesdropping rate maximization problem depends critically on the power allocation strategies at the suspicious transmitter. In particular, we consider two commonly used transmit power allocation strategies (i.e., fixed power transmission and water-filling power allocation) at the suspicious transmitter, and obtain the optimal cognitive jamming solutions for the legitimate monitor. It is shown that the legitimate monitor may also jam over the undesired fading states of unsuccessful eavesdropping, since such jamming helps reduce the communication rate of the suspicious link in these fading states and therefore increase the percentage of successful eavesdropping rate in the desired fading states. Numerical results show that the proposed proactive eavesdropping via cognitive jamming approach greatly outperforms three benchmark schemes including the conventional passive eavesdropping without jamming, the proactive eavesdropping with constant-power jamming, and the proactive eavesdropping with on-off jamming.

Next, inspired by the above optimal cognitive jamming with perfect SIC and global CSI, we further design an {\it online} cognitive jamming scheme under practical assumptions of residual SI and local CSI. It is shown that the online cognitive jamming scheme achieves similar eavesdropping performance as the optimal cognitive jamming with perfect SIC and global CSI, especially when the legitimate monitor has separately equipped eavesdropping and jamming antennas.

It is worth noting that our proposed proactive eavesdropping via cognitive jamming approach is different from the conventionally investigated jamming and eavesdropping attacks in the literature. In particular, the conventional jamming has been investigated to disrupt wireless communications (e.g., of enemies in ballfields) without considering eavesdropping (see, e.g., \cite{KashyapBasarSrikant2004,LiuLiKongZhao2015}). In contrast, our paper utilizes jamming to facilitate the simultaneous eavesdropping at legitimate monitors. On the other hand, there have also been a handful of recent works investigating the secrecy capacity in the presence of active eavesdroppers that can both jam and eavesdrop \cite{MukherjeeSwindlehurst2011,ZhouMahamHjorungnes2011,KapetanovicZhengRusek2015,XiongLiangLiGong2015}. However, these existing works focused on preserving the confidentiality of wireless communications by viewing the (passive or active) eavesdropping as illegitimate attacks, while in this paper we look at a new research angle by considering eavesdropping as legitimate monitoring from the surveillance perspective.

The remainder of this paper is organized as follows. Section \ref{sec:II} presents the system model and formulates the eavesdropping non-outage probability and the relative eavesdropping rate maximization problems of our interest under perfect SIC and global CSI at the legitimate monitor. Section \ref{sec:III} develops the optimal solution to the eavesdropping non-outage probability maximization problem. Sections \ref{sec:IV} and \ref{sec:V} propose the optimal solutions to the relative eavesdropping rate maximization problems by considering that the suspicious transmitter adopts fixed power transmission and water-filling power allocation, respectively. Section \ref{sec:VI} shows the numerical results to validate the performance of our proposed proactive eavesdropping via cognitive jamming approach. Section \ref{sec:Pro:Eav} presents the online cognitive jamming scheme under a more practical scenario with residual SI and local CSI. Finally, Section \ref{sec:VII} concludes this paper.

\section{System Model and Problem Formulation}\label{sec:II}

As shown in Fig. \ref{fig:0}, we consider a point-to-point suspicious wireless communication link from a transmitter to a receiver over a frequency non-selective channel, and there is a legitimate monitor aiming to eavesdrop this link. The suspicious transmitter and receiver are each deployed with a single antenna, and the legitimate monitor is equipped with two antennas, one for eavesdropping (receiving) and the other for jamming (transmitting). The legitimate monitor can operate in a full-duplex mode to jam and eavesdrop at the same time. We consider a block fading model, where the wireless channels remain constant over each block and may change from one block to another. Let $h_0(\nu)$, $h_1(\nu)$, and $h_2(\nu)$ denote the channel coefficients from the suspicious transmitter to the suspicious receiver, from the suspicious transmitter to the eavesdropping antenna of the legitimate monitor, and from the jamming antenna of the legitimate monitor to the suspicious receiver, respectively, where $\nu$ denotes the joint fading state. The corresponding channel power gains are denoted as $g_0(\nu) = |h_0(\nu)|^2$, $g_1(\nu) = |h_1(\nu)|^2$, and $g_2(\nu) = |h_2(\nu)|^2$, respectively. Here, $g_0(\nu)$, $g_1(\nu)$, and $g_2(\nu)$ are assumed to be three random variables with a continuous joint probability density function (PDF) denoted by $\phi_\nu(g_0,g_1,g_2)$. Both the suspicious transmitter and receiver perfectly know the CSI of the suspicious channel (i.e., $g_0(\nu)$).

In order to characterize the fundamental information-theoretic performance limits of proactive eavesdropping, we make two following two assumptions. First, the legitimate monitor can perfectly cancel the SI from the jamming antenna to the eavesdropping antenna by using advanced analog and digital SIC schemes \cite{Sabharwal2014}. Note that the implementation of SIC requires the legitimate monitor to know the loop-back channel from the jamming to the eavesdropping antennas (via efficient channel estimation) \cite{fullduplex}. Next, the legitimate monitor perfectly knows the global CSI of suspicious, eavesdropping, and jamming channels (i.e., $g_0(\nu)$, $g_1(\nu)$, and $g_2(\nu)$) at each fading state $\nu$, as well as the joint PDF $\phi_\nu(g_0,g_1,g_2)$. Note that the global CSI assumption has been commonly made in the information-theoretic literature (see, e.g., the correlated jamming in \cite{KashyapBasarSrikant2004} and the cognitive radio in \cite{Devroye2006,JovicicViswanath2009}). We will consider the practical scenario with residual SI and local CSI in Section \ref{sec:Pro:Eav}.

Let the message sent by the suspicious transmitter and the jamming signal generated by the legitimate monitor be denoted by $s$ and $x$, respectively, both of which are assumed to be circularly symmetric complex Gaussian (CSCG) random variables with zero mean and unit variance. Note that transmitting CSCG signals at the suspicious transmitter is known to achieve the channel capacity subject to the CSCG noise, while using CSCG jamming signals is the best strategy for the legitimate monitor to degrade the suspicious communication when the suspicious transmitter uses CSCG signaling \cite{KashyapBasarSrikant2004}. We consider that at each fading state $\nu$, the suspicious transmitter employs the transmit power $p(\nu) > 0$, and the legitimate monitor cognitively adjusts its jamming power to $q(\nu) \ge 0$. Let $P > 0$ and $Q > 0$ denote the maximum average transmit and jamming power at the suspicious transmitter and the legitimate monitor, respectively. Thus we have
\begin{align}
\mathbb{E}_{\nu}(p(\nu)) &\le P, \label{eqn:P:sum}\\
\mathbb{E}_{\nu}(q(\nu)) &\le Q, \label{eqn:P2:con1}
\end{align}
where $\mathbb{E}_{\nu}(\cdot)$ denotes the expectation over the joint fading state $\nu$. Then, the received signals at the suspicious receiver and the eavesdropping antenna of the legitimate monitor are respectively denoted as
\begin{align}
y_0 = & \sqrt{p(\nu)} h_0(\nu)s + \sqrt{q(\nu)}h_2(\nu)x + n_0,\\
y_1 =& \sqrt{p(\nu)} h_1(\nu)s + n_1,
\end{align}
where $n_0$ and $n_1$ with zero mean and variances $\sigma^2_0$ and $\sigma^2_1$ denote the additive white Gaussian noises (AWGNs) at the suspicious receiver and the legitimate monitor, respectively. Accordingly, the signal-to-interference-plus-noise ratio (SINR) at the suspicious receiver and the SNR at the legitimate monitor receiver are respectively denoted as
\begin{align}
\gamma_{0}(\nu) & = \frac{g_0(\nu)p(\nu)}{g_2(\nu)q(\nu)+\sigma^2_0},\\
\gamma_{1}(\nu) & = \frac{g_1(\nu)p(\nu)}{\sigma^2_1}.\label{eqn:gamma1}
\end{align}
As a result, the achievable rates (in bps/Hz) of the suspicious link and the eavesdropping link in the fading state $\nu$ are respectively denoted as
\begin{align}
r_0(\nu) = & \log_2\left(1+\frac{g_0(\nu)p(\nu)}{g_2(\nu)q(\nu)+\sigma^2_0}\right),\label{eqn:r0}\\
r_1(\nu) = & \log_2\left(1+\frac{g_1(\nu)p(\nu)}{\sigma^2_1 }\right).\label{eqn:r1}
\end{align}

Based on the SINR $\gamma_{0}(\nu)$ at the suspicious receiver and the SNR $\gamma_{1}(\nu)$ at the legitimate monitor for one particular fading state $\nu$, we consider that the legitimate monitor can successfully eavesdrop the suspicious communication only when $\gamma_{1}(\nu)$ is no smaller than $\gamma_0(\nu)$ (i.e., $\gamma_{1}(\nu) \ge \gamma_{0}(\nu)$ or equivalently  $r_1(\nu) \ge r_0(\nu)$), since in this case the legitimate monitor can successfully decode the information sent in the suspicious link. Here, in order to focus our study on the physical layer perspective, we have ignored the possible encryption and decryption methods that can be employed at higher layers in the suspicious user communication. Therefore, we introduce the following indicator function to denote the event of successful eavesdropping at the legitimate monitor:
\begin{align}
X(\nu) = \left\{\begin{array}{ll}
1,& {\rm if} ~\gamma_1(\nu) \ge \gamma_0(\nu)\\
0, & {\rm otherwise},
\end{array}\right.\label{eqn:indictor}
\end{align}
where $X(\nu) = 1$ and $X(\nu) = 0$ indicate eavesdropping non-outage and outage events, respectively. Note that the indicator function $X(\nu)$ is irrespective of the transmit power $p(\nu)$ at the suspicious transmitter. Accordingly, we define the eavesdropping rate of the legitimate monitor at fading state $\nu$ as
\begin{align}
r(\nu) = r_0(\nu)X(\nu).\label{eqn:r:system}
\end{align}

The legitimate eavesdropping performance depends on different application scenarios for the suspicious communication. Specifically, we consider both delay-sensitive and delay-tolerant suspicious applications, and define the corresponding legitimate eavesdropping performance metrics as follows.

First, consider delay-sensitive applications, in which the suspicious transmitter adopts non-zero transmit power $p(\nu)$ at each fading state to deliver {\it event-based information} with strict delay constraints (e.g., real-time videos taken by its own camera), and the legitimate monitor aims to continuously track or monitor critical suspicious events. In this case, the delivered suspicious messages (e.g., the real-time video clips) in different fading states have the same significance to report and infer such series of ongoing events, although they may be with different data rates (e.g., different resolutions) due to the channel fading. Under such an event-based scenario, it is beneficial for the legitimate monitor to successfully eavesdrop over as many fading states as possible. As a result, we introduce the eavesdropping non-outage probability, given by $\mathbb{E}_{\nu}(X(\nu))$, as the event-based legitimate eavesdropping performance metric. Then, we aim to maximize the eavesdropping non-outage probability $\mathbb{E}_{\nu}(X(\nu))$ by optimizing the jamming power allocation $\{q(\nu)\}$ at the legitimate monitor subject to its average power constraint in (\ref{eqn:P2:con1}), for which the optimization problem is formulated as
\begin{align}
{\rm{(P1)}}:~\max_{\{q(\nu)\}}&~\mathbb{E}_{\nu}(X(\nu))\nonumber\\
\mathrm{s.t.}~&~ q(\nu) \ge 0, \forall \nu\label{eqn:P2:con2}\\
&~(\ref{eqn:P2:con1}).\nonumber
\end{align}
Since the eavesdropping non-outage probability $X(\nu)$ is irrespective of the transmit power $p(\nu)$ at the suspicious transmitter, it is evident that the optimal cognitive jamming solution to (P1) is independent of the power allocation strategies employed at the suspicious transmitter. Also note that problem (P1) is non-convex in general, since its objective function is not concave over the jamming power allocation $\{q(\nu)\}$. Despite the non-convexity, we will solve problem (P1) optimally in Section \ref{sec:III}.

Next, consider delay-tolerant applications, where the suspicious transmitter sends {\it content-based information} (such as data files) to the receiver and the monitor targets at data accumulation and content analysis. In this case, every transmitted bit may have the same significance to help content analysis, and it is thus desirable for the legitimate monitor to eavesdrop as many bits (relative to the sent bits) as possible. As a result, we use the relative eavesdropping rate, defined as the average eavesdropping rate over the average communication rate of the suspicious link, i.e., $\frac{\mathbb{E}_\nu(r(\nu))}{\mathbb{E}_\nu(r_0(\nu))} = \frac{\mathbb{E}_\nu(r_0(\nu)X(\nu))}{\mathbb{E}_\nu(r_0(\nu))}$, as the content-based legitimate eavesdropping performance criterion. In this case, the relative eavesdropping rate maximization problem for the legitimate monitor is formulated as
\begin{align}
\max_{\{q(\nu)\}}&~\frac{\mathbb{E}_\nu( r_0(\nu)X(\nu))}{\mathbb{E}_\nu( r_0(\nu))} \label{problem:delay:insenstive}\\
\mathrm{s.t.}~&~(\ref{eqn:P2:con1})~{\rm and}~(\ref{eqn:P2:con2}).\nonumber
\end{align}
Problem (\ref{problem:delay:insenstive}) is in general more challenging to be solved than (P1), which is due to the fact that the objective function in (\ref{problem:delay:insenstive}) is non-concave and depends on the transmit power $\{p(\nu)\}$ employed at the suspicious transmitter. It is difficult to solve problem (\ref{problem:delay:insenstive}) under general power allocations at the suspicious transmitter. As a result,  in Sections \ref{sec:IV} and \ref{sec:V} we will solve problem (\ref{problem:delay:insenstive}) under two commonly adopted transmission schemes for the suspicious transmitter, i.e., fixed power transmission and water-filling power allocation, respectively.

\section{Optimal Cognitive Jamming in Delay-Sensitive Suspicious Applications}\label{sec:III}


%

First, we consider problem (P1) to maximize the eavesdropping non-outage probability for event-driven monitoring in delay-sensitive suspicious applications. Although (P1) is non-convex in general, one can verify that it satisfies the time-sharing condition defined in \cite{YuLui2006}, as shown in the following lemma.
\begin{lemma}\label{lemma:1}
Let $\{q^a(\nu)\}$ and $\{q^b(\nu)\}$ denote the optimal solutions to (P1) under the average jamming power constraints $Q^a$ and $Q^b$, respectively. Then for any $0\le \theta \le 1$, there always exists a feasible solution $\{q^c(\nu)\}$ such that
\begin{align*}
{\mathbb{E}_\nu(X^c(\nu))} &\ge \theta{\mathbb{E}_\nu(X^a(\nu))} + (1-\theta){\mathbb{E}_\nu(X^b(\nu))}, \\
\mathbb{E}_{\nu}(q^c(\nu)) &\le \theta Q^a + (1-\theta) Q^b,
\end{align*}
where $\{X^i(\nu)\}$ denotes the corresponding $\{X(\nu)\}$ in (\ref{eqn:indictor}) under the given jamming power allocation $\{q^i(\nu)\}$, $i\in\{a,b,c\}$.
\end{lemma}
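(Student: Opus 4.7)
The lemma is the Yu--Lui time-sharing condition \cite{YuLui2006} specialized to (P1); in the ergodic-fading setting it amounts to the claim that the achievable region of (average jamming power, eavesdropping non-outage probability) pairs is convex. My plan is to prove it constructively by time-sharing between the two given optimal policies $\{q^a(\nu)\}$ and $\{q^b(\nu)\}$, rather than by trying to convex-combine them pointwise (which would fail, since $X(\nu)$ is a discontinuous threshold function of $q(\nu)$).

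The basic construction is to use $q^a$ on a $\theta$-fraction of the fading realizations and $q^b$ on the remaining $(1-\theta)$-fraction. Formally, I would augment the probability space by an auxiliary uniform random variable $U \in [0,1]$ independent of $\nu$ (which is admissible because $\phi_\nu$ is atomless), and define
\begin{align*}
q^c(\nu) = \begin{cases} q^a(\nu), & U \le \theta, \\ q^b(\nu), & U > \theta. \end{cases}
\end{align*}
Feasibility $q^c(\nu) \ge 0$ is inherited from $q^a$ and $q^b$. By independence of $U$ and $\nu$ and linearity of expectation,
\begin{align*}
\mathbb{E}_\nu(q^c(\nu)) &= \theta\,\mathbb{E}_\nu(q^a(\nu)) + (1-\theta)\,\mathbb{E}_\nu(q^b(\nu)) \\
&\le \theta Q^a + (1-\theta) Q^b,
\end{align*}
using that $q^a,q^b$ obey (\ref{eqn:P2:con1}) with their respective budgets. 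Likewise, since $X^i(\nu)$ is a deterministic functional of $q^i(\nu)$ via (\ref{eqn:indictor}),
\begin{align*}
\mathbb{E}_\nu(X^c(\nu)) = \theta\,\mathbb{E}_\nu(X^a(\nu)) + (1-\theta)\,\mathbb{E}_\nu(X^b(\nu)),
\end{align*}
which is in fact an equality and so trivially satisfies the required $\ge$ inequality.

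The delicate step is derandomization -- presenting $q^c$ as a mapping of $\nu$ alone rather than of $(\nu,U)$ -- and I expect this to be the main obstacle. My plan is to handle it by exploiting the continuity of the joint PDF $\phi_\nu$: because the law of $\nu$ has no atoms, for every $\theta \in [0,1]$ there exists a measurable set $\mathcal{A} \subset \mathbb{R}_+^3$ with $\Pr(\nu \in \mathcal{A}) = \theta$, and by slicing the support of $\phi_\nu$ into a sufficiently fine partition and assigning $q^a$ versus $q^b$ cell-by-cell in the correct proportion, the conditional averages within each cell approach the unconditional ones arbitrarily well; a limiting argument then recovers the above (in)equalities with a deterministic $q^c(\nu)$. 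Equivalently, in the ergodic interpretation of (P1) the legitimate monitor may simply time-share across fading blocks, which is the usual Yu--Lui device and makes no further assumption beyond the continuity of $\phi_\nu$ already stated in Section \ref{sec:II}.
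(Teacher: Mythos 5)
Your proposal is correct and follows essentially the same route as the paper: the paper's proof is exactly the time-sharing device of using $q^a(\nu)$ for a $\theta$ fraction and $q^b(\nu)$ for a $1-\theta$ fraction of each fading state's occurrence, yielding $\mathbb{E}_\nu(X^c(\nu)) = \theta\,\mathbb{E}_\nu(X^a(\nu)) + (1-\theta)\,\mathbb{E}_\nu(X^b(\nu))$ and $\mathbb{E}_\nu(q^c(\nu)) \le \theta Q^a + (1-\theta) Q^b$ with equality/inequality exactly as you derive. Your extra care about derandomizing the auxiliary variable $U$ via the atomless law of $\nu$ is a refinement the paper simply absorbs into the informal ``percentage of the time'' phrasing, but it does not change the argument.
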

\begin{IEEEproof}
This lemma can be proved by using a similar approach as shown in \cite{YuLui2006}. Consider each fading state $\nu$ which happens over a certain amount of time. Then we can allocate the jamming power $q^c(\nu)$ to be $q^a(\nu)$ for a $\theta$ percentage of the time, and $q^{b}(\nu)$ for the remaining $1-\theta$ percentage of the time. Then it follows that $X^c(\nu) = \theta X^a(\nu)+(1-\theta) X^b(\nu)$ and $q^c(\nu) = \theta q^a(\nu) + (1-\theta) q^b(\nu)$. By coming all these fading states, we have ${\mathbb{E}_\nu(X^c(\nu))} = \theta {\mathbb{E}_\nu(X^a(\nu))} + (1-\theta) {\mathbb{E}_\nu(X^b(\nu))}$, and $\mathbb{E}_{\nu}(q^c(\nu)) = \theta \mathbb{E}_{\nu}(q^a(\nu)) + (1-\theta) \mathbb{E}_{\nu}(q^b(\nu))\le \theta Q^a + (1-\theta) Q^b$. This implies that the time-sharing condition stipulated in \cite{YuLui2006} is satisfied for problem (P1), and therefore, this lemma is verified.
\end{IEEEproof}

The time-sharing condition in Lemma \ref{lemma:1} ensures that strong duality or zero duality gap holds between (P1) and its Lagrange dual problem \cite[Theorem 1]{YuLui2006}.{\footnote{The strong duality between (P1) and and its Lagrange dual problem can also be verified by using the technique in \cite{LuoZhang2008}, which uses the Lyapunov theorem in functional analysis to prove the strong duality for a class of problems with ``continuous formulations''.}} Therefore, we can use the Lagrange duality method to solve problem (P1) optimally \cite{BoydVandenberghe2004}. The optimal solution to (P1) is obtained in the following proposition.

\begin{proposition}\label{proposition:P1}
The optimal cognitive jamming solution to (P1) is given as
\begin{align}
&q_1^*(\nu) = \nonumber\\
&
\left\{
\begin{array}{ll}
\left(\frac{g_0(\nu)}{g_1(\nu)}\sigma^2_1 - \sigma^2_0\right)\frac{1}{g_2(\nu)}, & {\rm if}~0 < \left(\frac{g_0(\nu)}{g_1(\nu)}\sigma^2_1 - \sigma^2_0\right)\frac{1}{g_2(\nu)} < \frac{1}{\lambda_1^*}, \\
0, &{\rm otherwise},
\end{array}
\right.\label{eqn:optimal:P1}
\end{align}
where $\lambda^*_1$ denotes the optimal dual variable associated with the average jamming power constraint in (\ref{eqn:P2:con1}). In particular, if $Q$ is sufficiently large with $\mathbb{E}_{\nu}\left(\left(\frac{g_0(\nu)}{g_1(\nu)}\sigma^2_1 - \sigma^2_0\right)\frac{1}{g_2(\nu)}\right) < Q$, it follows that $\lambda_1^*\to 0$ and
\begin{align}
q_1^*(\nu) = \left(\frac{g_0(\nu)}{g_1(\nu)}\sigma^2_1 - \sigma^2_0\right)\frac{1}{g_2(\nu)}, \forall \nu.\label{eqn:optimal:P1:case1}
\end{align}
Otherwise, $\lambda^*_1$ is set such that $\mathbb{E}_{\nu}(q_1^*(\nu)) = Q$.
\end{proposition}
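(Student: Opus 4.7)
The plan is to leverage the strong duality granted by Lemma \ref{lemma:1} and then solve the Lagrangian dual problem by pointwise (per-fading-state) optimization. Write the partial Lagrangian
\begin{equation*}
\mathcal{L}(\{q(\nu)\},\lambda_1) \;=\; \mathbb{E}_\nu\!\left(X(\nu) - \lambda_1 q(\nu)\right) + \lambda_1 Q,
\end{equation*}
with $\lambda_1 \ge 0$ the dual variable associated with (\ref{eqn:P2:con1}). Because the non-negativity constraint (\ref{eqn:P2:con2}) is separable across $\nu$, the dual function is obtained by maximizing $X(\nu) - \lambda_1 q(\nu)$ independently at each fading state.

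Next I would observe that, for a fixed $\nu$, the indicator $X(\nu)$ is a step function of $q(\nu)$: by the definitions of $\gamma_0(\nu)$ and $\gamma_1(\nu)$ and (\ref{eqn:indictor}), $X(\nu) = 1$ if and only if $q(\nu) \ge q_{\min}(\nu)$, where
\begin{equation*}
q_{\min}(\nu) \;=\; \left(\frac{g_0(\nu)}{g_1(\nu)}\sigma_1^2 - \sigma_0^2\right)\!\frac{1}{g_2(\nu)}.
\end{equation*}
Hence at any candidate optimum it is never useful to allocate more than $q_{\min}(\nu)$ (this would waste power without increasing $X(\nu)$), nor to allocate a strictly positive amount below $q_{\min}(\nu)$ (since $X(\nu)$ would remain $0$). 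So the pointwise maximizer is a binary choice: either $q(\nu) = 0$ (value $0$ if $q_{\min}(\nu) > 0$, value $1$ if $q_{\min}(\nu) \le 0$), or $q(\nu) = q_{\min}(\nu)$ (value $1 - \lambda_1 q_{\min}(\nu)$). Comparing the two options yields exactly the rule stated in (\ref{eqn:optimal:P1}): when $q_{\min}(\nu) \le 0$ the monitor already succeeds for free, and when $q_{\min}(\nu) > 0$ it jams if and only if $\lambda_1 q_{\min}(\nu) < 1$, i.e., $q_{\min}(\nu) < 1/\lambda_1^*$.

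To finish, I would pin down $\lambda_1^*$ by complementary slackness. If the unconstrained minimum power $\mathbb{E}_\nu(q_{\min}(\nu)^+)$ (with $(\cdot)^+ \triangleq \max(\cdot,0)$) does not exceed $Q$, the constraint (\ref{eqn:P2:con1}) is inactive, $\lambda_1^* = 0$ and the formula collapses to (\ref{eqn:optimal:P1:case1}); otherwise $\lambda_1^* > 0$ is chosen so that $\mathbb{E}_\nu(q_1^*(\nu)) = Q$, which is possible because the mapping $\lambda_1 \mapsto \mathbb{E}_\nu(q_1^*(\nu))$ is monotonically non-increasing in $\lambda_1$ and, by continuity of $\phi_\nu(g_0,g_1,g_2)$, varies continuously (the threshold event $q_{\min}(\nu) = 1/\lambda_1$ has probability zero, so the ambiguity at equality is immaterial).

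The main technical subtlety I expect is not the pointwise maximization itself, which is elementary, but the justification for applying Lagrangian duality to a problem whose per-state objective $X(\nu)$ is a discontinuous 0/1 function of $q(\nu)$, so that standard KKT-based convex arguments do not apply directly. This is precisely what Lemma \ref{lemma:1} is designed to handle: the time-sharing condition guarantees zero duality gap, so that any primal policy that attains the dual optimum pointwise (together with the appropriate $\lambda_1^*$) is globally optimal for (P1), despite the non-convexity of the original problem.
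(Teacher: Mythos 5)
Your proposal is correct and follows essentially the same route as the paper's Appendix~A proof: Lagrange duality justified by the time-sharing condition of Lemma \ref{lemma:1}, per-fading-state decomposition of the dual function, the binary comparison between $q(\nu)=0$ and $q(\nu)=q_{\min}(\nu)$ yielding the threshold rule, and complementary slackness to fix $\lambda_1^*$. Your added remarks on the monotonicity/continuity of $\lambda_1 \mapsto \mathbb{E}_\nu(q_1^*(\nu))$ and the measure-zero threshold event are slightly more careful than the paper's bisection argument but do not change the substance.
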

\begin{IEEEproof}
See Appendix \ref{app:P1}.
\end{IEEEproof}

Note that the optimal jamming power allocation $\{q_1^*(\nu)\}$ depends on $\left\{\left(\frac{g_0(\nu)}{g_1(\nu)}\sigma^2_1 - \sigma^2_0\right)\frac{1}{g_2(\nu)}\right\}$. For a fading state $\nu$ with $\left(\frac{g_0(\nu)}{g_1(\nu)}\sigma^2_1 - \sigma^2_0\right)\frac{1}{g_2(\nu)} \le 0$, the monitor can already overhear from the transmitter successfully without jamming. Thus, it always holds that $\gamma_1(\nu) \ge \gamma_0(\nu)$ and $X(\nu) = 1$, and thus no jamming is required, i.e., $q_1^*(\nu) = 0$. For each of the other fading states, $\left(\frac{g_0(\nu)}{g_1(\nu)}\sigma^2_1 - \sigma^2_0\right)\frac{1}{g_2(\nu)} > 0$ denotes the required jamming power for the legitimate monitor to successfully eavesdrop the suspicious link, under which it holds that $\gamma_1(\nu) = \gamma_0(\nu)$. Among these fading states, the legitimate monitor selects to jam those with $\left(\frac{g_0(\nu)}{g_1(\nu)}\sigma^2_1 - \sigma^2_0\right)\frac{1}{g_2(\nu)}$ smaller than the threshold $\frac{1}{\lambda_1^*}$, so as to maximize the eavesdropping non-outage probability while satisfying the average jamming power constraint.
\vspace{-0em}

\section{Optimal Cognitive Jamming in Delay-Tolerant Suspicious Applications with Fixed Power Transmission}\label{sec:IV}

In this section, we consider problem (\ref{problem:delay:insenstive}) to maximize the relative eavesdropping rate for content-driven monitoring in delay-tolerant suspicious applications, where the suspicious transmitter employs fixed power transmission, i.e., $p(\nu) = P, \forall \nu$. Note that fixed power transmission is a commonly used strategy that is easy to implement at the transmitter, while we will consider the case with adaptive power transmission at the suspicious transmitter in Section \ref{sec:V}. With fixed power transmission, we rewrite the achievable rate $r_0(\nu)$ of the suspicious link in (\ref{eqn:r0}) as
\begin{align}
\bar r_0(\nu) = & \log_2\left(1+\frac{g_0(\nu)P}{g_2(\nu)q(\nu)+\sigma^2_0}\right).
\end{align}
As a result, the relative eavesdropping rate maximization problem (\ref{problem:delay:insenstive}) is reformulated as
\begin{align}
{\rm{(P2)}}:~\max_{\{q(\nu)\}}&~\frac{\mathbb{E}_\nu(\bar r_0(\nu)X(\nu))}{\mathbb{E}_\nu(\bar r_0(\nu))}\nonumber\\
\mathrm{s.t.}~~&~(\ref{eqn:P2:con1})~{\rm and}~(\ref{eqn:P2:con2}).\nonumber
\end{align}

In the following, we solve problem (P2) by first equivalently transforming it into solving a sequence of feasibility problems, and then using the Lagrange duality method to solve each feasibility problem.

First, we introduce an auxiliary variable $t$, and equivalently re-express problem (P2) as
\begin{align}
{\rm{(P2.1)}}:~\max_{\{q(\nu)\},t}~&~ t \nonumber\\
\mathrm{s.t.}~&~{\mathbb{E}_\nu(\bar r_0(\nu)X(\nu))} \ge t {\mathbb{E}_\nu(\bar r_0(\nu))}\label{eqn:P21:con1}\\
&~(\ref{eqn:P2:con1})~{\mathrm{and}}~(\ref{eqn:P2:con2}).\nonumber
\end{align}
Then, we show that the optimal solution to problem (P2.1) can be obtained by equivalently solving a sequence of feasibility problems each for a fixed $t$ and given by
\begin{align}
{\rm{(P2.2)}}:~\mathrm{find}~&~{\{q(\nu)\}} \nonumber\\
\mathrm{s.t.}~&~(\ref{eqn:P2:con1}),~(\ref{eqn:P2:con2})~{\mathrm{and}}~(\ref{eqn:P21:con1}).
\end{align}
Suppose that the optimal value of problem (P2.1) is denoted as $t^*$, where it must hold that $0 \le t^* \le 1$. If problem (P2.2) is feasible under a given $t$, then it follows that $t^* \ge t$; otherwise, $t^* < t$. Thus, by solving problem (P2.2) with different $t$'s and applying a simple bisection search over $t$, $t^*$ can be obtained for problem (P2.1). As a result, to obtain the optimal solution to (P2.1) and thus (P2), we only need to solve problem (P2.2) under any given $0\le t\le 1$.

Next, we focus on solving problem (P2.2) with any given $0\le t\le 1$. Despite that problem (P2.2) is still non-convex, it can be verified that strong duality or zero duality gap holds for (P2.2), since it satisfies the time-sharing condition \cite{YuLui2006}, which can be similarly shown as in Lemma \ref{lemma:1}. For this reason, in the following we check the feasibility of (P2.2) and obtain its optimal solution (when it is feasible) by making use of the Lagrange dual function of problem (P2.2).

Let the dual variables associated with the constraints in (\ref{eqn:P21:con1}) and (\ref{eqn:P2:con1}) be denoted by $\mu \ge 0$ and $\lambda \ge 0$, respectively. Then the partial Lagrangian of problem (P2.2) is denoted as
\begin{align}
&\mathcal{L}_2(\{q(\nu)\},\mu,\lambda)  \nonumber\\ = &
\mu\left({\mathbb{E}_\nu((X(\nu)-t)\bar r_0(\nu))}\right)  - \lambda \left(\mathbb{E}_{\nu}(q(\nu)) - Q\right).
\end{align}
As a result, the dual function of (P2.2) is expressed as
\begin{align}
f_2(\mu,\lambda)=\max_{\{q(\nu)\ge 0\}}~\mathcal{L}_2(\{q(\nu)\},\mu,\lambda).\label{eqn:P2:dual:function}
\end{align}
Then, the following proposition helps determine whether problem (P2.2) is feasible or not.

\begin{proposition}\label{proposition:P2:1}
Problem (P2.2) is infeasible if and only if there exist $\mu \ge 0$ and $\lambda \ge 0$ such that $f_2(\mu,\lambda) < 0$.
\end{proposition}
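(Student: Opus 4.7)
The plan is a standard strong-duality argument: the reverse implication follows from weak duality and requires no new machinery, while the forward implication leverages the time-sharing property already invoked in the text for (P2.2).

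For the direction $f_2(\mu,\lambda)<0 \Rightarrow$ (P2.2) infeasible, I would argue by contradiction. If $\{q(\nu)\}$ were any feasible point of (P2.2), then constraint (\ref{eqn:P21:con1}) gives $\mathbb{E}_\nu((X(\nu)-t)\bar r_0(\nu))\ge 0$ while constraint (\ref{eqn:P2:con1}) gives $\mathbb{E}_\nu(q(\nu))-Q\le 0$. Multiplying these by $\mu\ge 0$ and $-\lambda\le 0$ respectively and summing yields $\mathcal{L}_2(\{q(\nu)\},\mu,\lambda)\ge 0$, hence by the definition in (\ref{eqn:P2:dual:function}) we have $f_2(\mu,\lambda)\ge \mathcal{L}_2(\{q(\nu)\},\mu,\lambda)\ge 0$, contradicting the hypothesis.

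For the converse, I would introduce the auxiliary maximization
\[
v^{\star} \;=\; \max_{\{q(\nu)\ge 0\},~\mathbb{E}_\nu(q(\nu))\le Q}~ \mathbb{E}_\nu\bigl((X(\nu)-t)\bar r_0(\nu)\bigr),
\]
whose Lagrange dual function, with only the average-power constraint dualized, is exactly $f_2(1,\lambda)$. By inspection, (P2.2) is infeasible if and only if $v^{\star}<0$. I would then argue that this auxiliary problem inherits the same time-sharing property as (P2.2): interleaving two power allocations $q^{a}(\nu)$ and $q^{b}(\nu)$ within each fading state in proportions $\theta$ and $1-\theta$ makes both $\mathbb{E}_\nu((X(\nu)-t)\bar r_0(\nu))$ and $\mathbb{E}_\nu(q(\nu))$ exactly linear in $\theta$, mirroring the construction in Lemma \ref{lemma:1}. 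Hence \cite[Theorem 1]{YuLui2006} applies and strong duality gives $v^{\star}=\inf_{\lambda\ge 0} f_2(1,\lambda)$. If $v^{\star}<0$, then some $\lambda^{\star}\ge 0$ satisfies $f_2(1,\lambda^{\star})<0$, and the pair $(\mu,\lambda)=(1,\lambda^{\star})$ serves as the required witness.

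The only delicate step I anticipate is confirming that the auxiliary problem satisfies the time-sharing hypothesis so that \cite[Theorem 1]{YuLui2006} can be invoked; once that is in place, the remainder is a routine weak/strong duality application. I do not foresee measurability or integrability complications, since $X(\nu)$, $\bar r_0(\nu)$, and $q(\nu)$ are all well-defined at each fading state under the assumed continuous joint PDF $\phi_\nu(g_0,g_1,g_2)$, and the boundary case $\mu=0$ need not be treated separately because $f_2(0,\lambda)\ge \lambda Q\ge 0$ automatically, so any witness of $f_2<0$ must in fact have $\mu>0$, consistent with the $\mu=1$ construction above.
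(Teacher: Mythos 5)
Your proof is correct, and the weak-duality (``if'') direction is identical to the paper's. For the ``only if'' direction you and the paper both run a phase-I-style argument---build an always-feasible auxiliary problem whose optimal value is negative exactly when (P2.2) is infeasible, verify time-sharing for it as in Lemma \ref{lemma:1}, and invoke strong duality via \cite{YuLui2006} to extract the witness pair---but you swap the roles of the two constraints. The paper fixes an arbitrary $\lambda>0$, maximizes the power slack $\lambda\left(Q-\mathbb{E}_{\nu}(q(\nu))\right)$ subject to the rate constraint (\ref{eqn:P21:con1}), and dualizes that rate constraint to produce $\underline{\mu}$; you fix $\mu=1$, maximize $\mathbb{E}_\nu\bigl((X(\nu)-t)\bar r_0(\nu)\bigr)$ subject to the power constraint (\ref{eqn:P2:con1}), and dualize the power constraint to produce $\lambda^{\star}$. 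Your variant is arguably the more natural one: the equivalence ``(P2.2) infeasible iff $v^{\star}<0$'' is immediate from the definitions, whereas the paper needs the extra observation that the maximizer of the power slack must violate the budget when (P2.2) is infeasible; your closing remark that $f_2(0,\lambda)=\lambda Q\ge 0$ also neatly justifies normalizing to $\mu=1$ (the paper instead notes the positive homogeneity $f_2(\alpha\mu,\alpha\lambda)=\alpha f_2(\mu,\lambda)$ separately). Both arguments share the same unremarked gap: each assumes the auxiliary supremum is attained (the paper writes down an optimal $\{\underline q(\nu)\}$; you need attainment to rule out the boundary case $v^{\star}=0$ unattained), which is standard to gloss over in this literature and not a defect specific to your write-up.
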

\begin{IEEEproof}
See Appendix \ref{app:A}.
\end{IEEEproof}

Note that for any $\alpha > 0$, $f_2(\alpha\mu,\alpha\lambda) = \alpha f_2(\mu,\lambda)$. As a result, if problem (P2.2) is infeasible, then $f_2(\mu,\lambda)$ is unbounded from below, i.e., $f_2(\mu,\lambda) \to -\infty$; while if problem (P2.2) is feasible, then it follows that $\min_{\mu\ge 0,\lambda\ge 0}f_2(\mu,\lambda) = f_2(\mu_2^\star,\lambda_2^\star) = 0$ with $\mu_2^\star \ge 0$ and $\lambda_2^\star\ge 0$ being the optimal dual solutions to problem (P2.2). This observation will be used to develop a numerical algorithm to solve problem (P2.2) later.

Now, it remains to solve problem (\ref{eqn:P2:dual:function}) to obtain $f_2(\mu,\lambda)$ under any given $\mu \ge 0$ and $\lambda \ge 0$. By dropping the constant $\lambda Q$, problem (\ref{eqn:P2:dual:function}) can be decomposed into various subproblems as follows each for one fading state $\nu$.
\begin{align}
\max_{q(\nu)\ge 0}~\mu(X(\nu)-t)\bar r_0(\nu) - \lambda q(\nu)\label{eqn:P2:dual:function:decom}
\end{align}
We then have the following proposition.

\begin{proposition}\label{proposition:P2:2}
Under any given $\mu \ge 0$ and $\lambda \ge 0$, the optimal solution to problem (\ref{eqn:P2:dual:function:decom}) and thus problem (\ref{eqn:P2:dual:function}) is given as
\begin{align}
&q_2^{(\mu,\lambda)}(\nu) = \nonumber\\&
\left\{
\begin{array}{ll}
0, & {\rm if}~\left(\frac{g_0(\nu)}{g_1(\nu)}\sigma^2_1 - \sigma^2_0\right)\frac{1}{g_2(\nu)} \le 0\\
\left(\frac{g_0(\nu)}{g_1(\nu)}\sigma^2_1 - \sigma^2_0\right)\frac{1}{g_2(\nu)}, & {\rm if}~\left(\frac{g_0(\nu)}{g_1(\nu)}\sigma^2_1 - \sigma^2_0\right)\frac{1}{g_2(\nu)} > 0~\\
&~~~~~~{\rm and}~\bar v_1(\nu) \ge  \bar v_2(\nu) \\
\bar q(\nu),& {\rm if}~\left(\frac{g_0(\nu)}{g_1(\nu)}\sigma^2_1 - \sigma^2_0\right)\frac{1}{g_2(\nu)} > 0~\\
&~~~~~~{\rm and}~\bar v_1(\nu) <  \bar v_2(\nu),
\end{array}
\right.\label{eqn:dual:function:P2:solution}
\end{align}
with
\begin{align}
\bar q(\nu) \triangleq&
\min\bigg(\max\bigg(0,\frac{\sqrt{g_0^2(\nu)P^2 + 4 t \mu g_0(\nu)g_2(\nu) P/(\ln2\cdot \lambda)}}{2g_2(\nu)}\nonumber\\&
-\frac{g_0(\nu)P}{2g_2(\nu)} - \frac{\sigma_0^2}{g_2(\nu)}\bigg), \left(\frac{g_0(\nu)}{g_1(\nu)}\sigma^2_1 - \sigma^2_0\right)\frac{1}{g_2(\nu)}\bigg)\label{eqn:hatq}
\end{align}
denoting the optimal jamming power when the legitimate receiver cannot eavesdrop the suspicious link. Here, $\bar v_1(\nu)$ and $\bar v_2(\nu)$ denote the optimal values achieved by problem (\ref{eqn:P2:dual:function:decom}) when $X(\nu) = 1$ (eavesdropping is successful) and $X(\nu) = 0$ (eavesdropping is not successful), respectively, and are given by
\begin{align}
\bar v_1(\nu) = &\mu\left(1- t\right) \log_2\left(1+\frac{g_1(\nu)P}{\sigma^2_1 }\right) \nonumber\\&
- \lambda \left(\frac{g_0(\nu)}{g_1(\nu)}\sigma^2_1 - \sigma^2_0\right)\frac{1}{g_2(\nu)}, \label{eqn:v1}\\
\bar v_2(\nu) = & -\mu t \log_2\left(1+\frac{g_0(\nu)P}{g_2(\nu)\bar q(\nu) +\sigma^2_0}\right) - \lambda \bar q(\nu).\label{eqn:v2}
\end{align}
\end{proposition}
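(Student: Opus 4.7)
The plan is to exploit the structure of the indicator $X(\nu)$, which is a simple threshold function of $q(\nu)$. Writing out the non-outage condition $\gamma_1(\nu)\ge\gamma_0(\nu)$ gives the equivalent jamming threshold
\begin{equation*}
q_{\rm th}(\nu) \triangleq \Bigl(\tfrac{g_0(\nu)}{g_1(\nu)}\sigma_1^2-\sigma_0^2\Bigr)\tfrac{1}{g_2(\nu)},
\end{equation*}
so that $X(\nu)=1$ iff $q(\nu)\ge q_{\rm th}(\nu)$. The objective of \eqref{eqn:P2:dual:function:decom} is thus piecewise-defined in $q(\nu)$, and I would split the analysis on the sign of $q_{\rm th}(\nu)$.

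First, if $q_{\rm th}(\nu)\le 0$, then $X(\nu)\equiv 1$ for every $q(\nu)\ge 0$, so the objective reduces to $\mu(1-t)\bar r_0(\nu)-\lambda q(\nu)$. Since $\bar r_0(\nu)$ is strictly decreasing in $q(\nu)$ and $\mu,\lambda,1-t\ge 0$, the maximum is attained at $q(\nu)=0$, matching the first branch of (\ref{eqn:dual:function:P2:solution}).

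Next, if $q_{\rm th}(\nu)>0$, I would separately maximize over the two regimes $\{q\ge q_{\rm th}(\nu)\}$ ($X=1$) and $\{0\le q< q_{\rm th}(\nu)\}$ ($X=0$), then compare the two. On $\{q\ge q_{\rm th}(\nu)\}$ the objective $\mu(1-t)\bar r_0(\nu)-\lambda q$ is again decreasing in $q$, so its maximum is at $q=q_{\rm th}(\nu)$; evaluating $\bar r_0(\nu)$ at this point uses $\gamma_0(\nu)=\gamma_1(\nu)$ and yields exactly $\bar v_1(\nu)$ in (\ref{eqn:v1}). On $\{0\le q<q_{\rm th}(\nu)\}$ the objective is $-\mu t\bar r_0(\nu)-\lambda q$. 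A short calculation shows that $-\log_2(1+g_0(\nu)P/(g_2(\nu)q+\sigma_0^2))$ is concave in $q$ (its second derivative in $u=g_2(\nu)q+\sigma_0^2$ is negative), so the whole expression is concave and the KKT condition gives a unique interior candidate from a quadratic in $u$; solving and writing $q=(u-\sigma_0^2)/g_2(\nu)$, then clipping to the interval $[0,q_{\rm th}(\nu)]$, delivers $\bar q(\nu)$ in (\ref{eqn:hatq}) and value $\bar v_2(\nu)$ in (\ref{eqn:v2}). The optimum over the two regimes is whichever of $\bar v_1(\nu)$ and $\bar v_2(\nu)$ is larger, yielding the remaining two branches of (\ref{eqn:dual:function:P2:solution}).

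The main obstacle is handling the discontinuity caused by $X(\nu)$: one must be careful that the supremum on the open interval $\{0\le q<q_{\rm th}(\nu)\}$ is actually attained (it is, by concavity and the clipping to a closed subinterval), and that the comparison of $\bar v_1$ and $\bar v_2$ covers the boundary case cleanly (ties can be broken in favour of the non-outage branch without loss). Beyond that, deriving $\bar q(\nu)$ is only a matter of solving a quadratic equation in $u=g_2(\nu)q(\nu)+\sigma_0^2$ from the first-order condition $\mu t g_0(\nu)g_2(\nu)P/[\ln 2\cdot u(u+g_0(\nu)P)]=\lambda$, so no further technical difficulty arises.
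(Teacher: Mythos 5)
Your proposal is correct and follows essentially the same route as the paper's proof in Appendix C: split on the sign of the jamming threshold $\left(\frac{g_0(\nu)}{g_1(\nu)}\sigma^2_1-\sigma^2_0\right)\frac{1}{g_2(\nu)}$, solve the $X(\nu)=1$ regime by monotonicity and the $X(\nu)=0$ regime as a convex problem whose first-order condition is a quadratic in $u=g_2(\nu)q(\nu)+\sigma_0^2$, then compare $\bar v_1(\nu)$ and $\bar v_2(\nu)$. Your remark on the open-interval boundary case matches the paper's observation that when $\bar q(\nu)$ hits the threshold one necessarily has $\bar v_1(\nu)\ge\bar v_2(\nu)$, so the tie-break toward the non-outage branch loses nothing.
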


\begin{IEEEproof}
See Appendix \ref{app:B}.
\end{IEEEproof}

With Propositions \ref{proposition:P2:1} and \ref{proposition:P2:2} at hand, we are ready to present the complete algorithm to solve the feasibility problem (P2.2) via the subgradient based ellipsoid method \cite{BoydConvexII}, by using the fact that the subgradient of $f_2(\mu,\lambda)$ is 
$\mv s_2(\mu,\lambda) = \left[\mathbb{E}_\nu\left(\left(X^{(\mu,\lambda)}(\nu) - t\right) \bar r_0^{(\mu,\lambda)}(\nu)\right),
Q - {\mathbb{E}_\nu(q_2^{(\mu,\lambda)}(\nu))}\right]^T$
under given $\mu$ and $\lambda$. Here, $\{\bar r_0^{(\mu,\lambda)}(\nu)\}$ and $\{X^{(\mu,\lambda)}(\nu)\}$ denote the corresponding $\{\bar r_0(\nu)\}$ and $\{X(\nu)\}$ under given $\{q_2^{(\mu,\lambda)}(\nu)\}$, respectively. The detailed algorithm for solving problem (P2.2) is summarized as Algorithm 1 in Table \ref{table:1}, for which the optimal solution (when (P2.2) is feasible) is denoted as $\left\{q_2^{(\mu_2^\star,\lambda_2^\star)}(\nu)\right\}$ with $\mu_2^\star$ and $\lambda_2^\star$ denoting the optimal dual solution.

\begin{table}[!t]\scriptsize
\caption{Algorithm for Solving the Feasibility Problem (P2.2)}
\label{table:framework} \centering
\begin{tabular}{|p{8cm}|}
\hline
\textbf{Algorithm 1}\\
\hline\vspace{0.01cm}
1) {\bf Initialization:} Set the iteration index $n=0$, and given an ellipsoid $\xi^{(0)} \subseteq \mathbb{R}^2$ centered at $[\mu^{(0)},\lambda^{(0)}]^T$. \\
2) {\bf Repeat:}
  \begin{itemize} \setlength{\itemsep}{0pt}
    \item[a)] Solve problem (\ref{eqn:P2:dual:function}) under given $\mu^{(n)}$ and $\lambda^{(n)}$ by using Proposition \ref{proposition:P2:2} to obtain $f_2(\mu^{(n)},\lambda^{(n)})$;
    \item[b)] If $f_2(\mu^{(n)},\lambda^{(n)}) < 0$, then problem (P2.2) is infeasible, exit the algorithm; otherwise, go to the next step.
    \item[c)] Update the ellipsoid $\xi^{(n+1)}$ based on $\xi^{(n)}$ and the subgradient $\mv s_2(\mu^{(n)},\lambda^{(n)})$. Set $[\mu^{(n+1)},\lambda^{(n+1)}]^T$ as the center for $\xi^{(n+1)}$.
    \item[d)] $n \gets n+1$;
  \end{itemize}
  3) {\bf Until} the stopping criteria for the ellipsoid method is met.\\
  4) {\bf Set} $\mu_2^\star = \mu^{(n)}$ and $\lambda_2^\star = \lambda^{(n)}$. Problem (P2.2) is feasible, and $\left\{q_2^{(\mu_2^\star,\lambda_2^\star)}(\nu)\right\}$ in Proposition \ref{proposition:P2:2} becomes its optimal solution.\\
 \hline
\end{tabular}\label{table:1}
\end{table}

Finally, by applying Algorithm 1 to solve problem (P2.2) together with the bisection search for finding the optimal $t^*$, we obtain the optimal solution to problem (P2.1) and (P2). Under $t^*$, denote the corresponding optimal dual solution $\mu_2^\star$ and $\lambda_2^\star$ to (P2.2) as $\mu_2^*$ and $\lambda_2^*$. Then the optimal solution to (P2.1) and (P2) is given as $\left\{q_2^*(\nu)\right\}$ with $q_2^*(\nu) = q_2^{(\mu_2^*,\lambda_2^*)}(\nu), \forall \nu$.

\section{Optimal Cognitive Jamming in Delay-Tolerant Suspicious Applications with Water-Filling Power Allocation}\label{sec:V}

In this section, we consider problem (\ref{problem:delay:insenstive}) to maximize the relative eavesdropping rate for content-driven monitoring in delay-tolerant suspicious applications, where the suspicious transmitter employs adaptive power transmission to maximize its own average communication rate via water-filling power allocation over different fading states. In this case, the power allocation $\{p(\nu)\}$ at the suspicious transmitter varies depending on the jamming power profile $\{q(\nu)\}$ at the legitimate monitor. As a result, we first present the water-filling power allocation at the suspicious transmitter under any given $\{q(\nu)\}$, and then present the relative eavesdropping rate maximization problem over $\{q(\nu)\}$ under such a power adaptation strategy.

First, suppose that the jamming power profile $\{q(\nu)\}$ at the legitimate monitor is given. In this case, the suspicious transmitter optimizes its transmit power allocation $\{p(\nu)\}$ to maximize its average achievable data rate $\mathbb{E}_\nu\left(r_0(\nu)\right)$ in (\ref{eqn:r0}) subject to its average power constraint in (\ref{eqn:P:sum}),
for which the optimal water-filling power allocation solution is given as \cite{Goldsmith1997}{\footnote{The implementation of the water-filling power allocation requires the suspicious transmitter to know the interference power $g_2(\nu)q(\nu)$, which can be measured by the suspicious receiver and sent back to the suspicious transmitter.}
\begin{align}\label{eqn:solution:p}
\hat p(\nu) = \left[\frac{1}{\ln 2 \cdot \beta} - \frac{g_2(\nu)q(\nu)+\sigma^2_0}{g_0(\nu)} \right]^+,~\forall \nu,
\end{align}
where $[x]^+ = \max(x,0)$, and $\beta \ge 0$ is the Lagrange dual variable associated with the average transmit power constraint in (\ref{eqn:P:sum}) at the suspicious transmitter, such that
\begin{align}
\mathbb{E}_\nu\left(\hat p(\nu)\right)=P. \label{eqn:Pave}
\end{align}
Here, $\frac{1}{\ln 2 \cdot \beta}$ can be interpreted as the water level. Consequently, the resulting achievable rate of the suspicious link for the fading state $\nu$ is given by
\begin{align}
\hat r_0(\nu) 
& = \left[\log_2\left( \frac{g_0(\nu)}{\ln 2 \cdot \beta(g_2(\nu)q(\nu) + \sigma^2_0)} \right)\right]^+,\label{eqn:hatr}
\end{align}
and the corresponding relative eavesdropping rate is given as $\frac{\mathbb{E}_\nu(\hat r_0(\nu)X(\nu))}{\mathbb{E}_\nu(\hat r_0(\nu))}$.

Next, under the water-filling power allocation in (\ref{eqn:solution:p}) for the suspicious transmitter, the relative eavesdropping rate maximization problem (\ref{problem:delay:insenstive}) over the jamming power $\{q(\nu)\}$ for the legitimate monitor is re-expressed as
\begin{align}
{\rm{(P3)}}:~\max_{\{q(\nu)\},\beta \ge 0}&~\frac{\mathbb{E}_\nu(\hat r_0(\nu)X(\nu))}{\mathbb{E}_\nu(\hat r_0(\nu))}\nonumber\\
\mathrm{s.t.}~~&~(\ref{eqn:P2:con1}),~(\ref{eqn:P2:con2}),~{\rm and}~(\ref{eqn:Pave}),\nonumber
\end{align}
where $\beta$ is an auxiliary variable to be optimized in addition to $\{q(\nu)\}$.\footnote{It is worth noting that the cognitive jamming optimization problem (P3) here can be equivalently formulated as a bi-level optimization problem, where the lower-level optimization task is for the suspicious transmitter to maximize its average achievable rate $\mathbb{E}_{\nu}(r_0(\nu))$, and the upper-level optimization task is for the legitimate monitor to maximize the relative eavesdropping rate. In particular, (P3) is equivalent to the upper-level optimization task, while the water-filling power allocation in (\ref{eqn:solution:p})  corresponds to the optimal solution (with respect to an auxiliary variable $\beta$) to the lower-level optimization.} In the rest of this section, we focus on solving problem (P3).

Note that problem (P3) is a more difficult problem than (P2). This is due to the fact that both the objective function of (P3) and the constraint in (\ref{eqn:Pave}) are non-convex, and furthermore, the  auxiliary variable $\beta$ is related to all the fading states. To optimally solve problem (P3), we adopt an approach by first finding the optimal $\{q(\nu)\}$ under any given auxiliary variable $\beta$ that is feasible (i.e., for (\ref{eqn:P2:con1}),~(\ref{eqn:P2:con2}),~{\rm and}~(\ref{eqn:Pave}) to be satisfied at the same time), and then using a one-dimensional exhaustive search to obtain the optimal $\beta$ for (P3) over its feasible regime, i.e., the regime of $\beta$ for problem (P3) to be feasible. In the following, we first determine the feasible regime of $\beta$, and then optimize $\{q(\nu)\}$ for problem (P3) under any given $\beta$ within such a feasible regime.


\subsection{Finding the Feasible Regime of $\beta$}\label{sec:VA}


It is evident that in order for (\ref{eqn:P2:con1}),~(\ref{eqn:P2:con2}),~{\rm and}~(\ref{eqn:Pave}) to be satisfied at the same time, the feasible $\beta$ is upper and lower bounded by $\beta^{\max}$ and $\beta^{\min}$, respectively. First, we obtain the upper bound of $\beta$, i.e., $\beta^{\max}$. It is observed from (\ref{eqn:solution:p}) and (\ref{eqn:Pave}) that as the jamming power increases, the variable $\beta$ decreases accordingly. As a result, the upper bound of $\beta$ is achieved when the legitimate monitor does not send any jamming signals by setting $q(\nu) = 0,\forall \nu$. By using this together with (\ref{eqn:solution:p}) and (\ref{eqn:Pave}), the upper bound $\beta^{\max}$ can be obtained.

Next, we obtain the lower bound of $\beta$, i.e., $\beta^{\min}$. Based on the similar observation above, the lower bound $\beta^{\min}$ is achieved when full jamming power is employed with $\mathbb{E}_{\nu}\left(q(\nu)\right) = Q$. However, it remains unknown how the jamming power is allocated over different fading states. To overcome this issue, we propose to solve a series of feasibility problems each with a given $\beta$.
\begin{align}
{\rm find}~&~\{q(\nu)\} \label{eqn:29:feasibility}\\
\mathrm{s.t.}~&~(\ref{eqn:P2:con1}),~(\ref{eqn:P2:con2}),~{\rm and}~(\ref{eqn:Pave}).\nonumber
\end{align}
For any given $\beta$, if problem (\ref{eqn:29:feasibility}) is feasible, then $\beta^{\min} \le \beta$; otherwise, we have $\beta^{\min} > \beta$. Based on this observation, $\beta^{\min}$ can be found by solving problem (\ref{eqn:29:feasibility}) under any given $\beta$, together with a bisection search over $\beta$. Since it is known that $\beta$ should lie within the interval $[0, \beta^{\max}]$, the bisection search is employed over such an interval. Now, we only need to focus on solving problem (\ref{eqn:29:feasibility}) under any given $\beta \in [0, \beta^{\max}]$.

First, we show that strong duality holds for problem (\ref{eqn:29:feasibility}), although it is non-convex in general due to the nonlinear equality constraint in (\ref{eqn:Pave}).
\begin{lemma}\label{lemma:2}
Strong duality holds between problem (\ref{eqn:29:feasibility}) and its dual problem.
\end{lemma}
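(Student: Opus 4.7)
The plan is to mimic the hidden-convexity argument that underlies Lemma \ref{lemma:1}. First, I would recast (\ref{eqn:29:feasibility}) as the zero-objective minimization $\min_{\{q(\nu) \ge 0\}}\ 0$ subject to (\ref{eqn:P2:con1}) and (\ref{eqn:Pave}), recalling that $\hat p(\nu)$ is a fixed pointwise function of $q(\nu)$ once $\beta$ is given via (\ref{eqn:solution:p}). I would attach Lagrange multipliers $\lambda \ge 0$ to the jamming-power inequality and $\kappa \in \mathbb{R}$ to the equality on transmit power. Because the two constraints are integrals over $\nu$, the Lagrangian separates across fading states, so the dual function reduces to per-state maximizations over $q(\nu) \ge 0$ that can be handled pointwise. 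Strong duality is then equivalent to the claim that the dual optimum matches the primal optimum (zero if feasible, unbounded if infeasible).

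To establish no duality gap despite the nonconvex equality constraint, I would introduce the achievable set
\[
\mathcal{A} = \{(\mathbb{E}_\nu(q(\nu)),\ \mathbb{E}_\nu(\hat p(\nu))) : q(\nu) \ge 0 \text{ for all } \nu\} \subseteq \mathbb{R}^2,
\]
so that (\ref{eqn:29:feasibility}) is feasible iff $\mathcal{A}$ intersects the set $\{(y_1,y_2) : y_1 \le Q,\ y_2 = P\}$. I would then verify that $\mathcal{A}$ is convex via the same time-sharing construction used to prove Lemma \ref{lemma:1}. Given two allocations $\{q^a(\nu)\},\{q^b(\nu)\}$ producing points $(Q^a,P^a),(Q^b,P^b) \in \mathcal{A}$, the allocation that within each state $\nu$ applies $q^a(\nu)$ on a fraction $\theta$ of symbols and $q^b(\nu)$ on the remaining $1-\theta$ yields averages exactly $\theta Q^a + (1-\theta)Q^b$ and $\theta P^a + (1-\theta)P^b$, because $\hat p(\nu)$ is a time-invariant pointwise function of $q(\nu)$ for the fixed $\beta$. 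The continuity of the joint PDF $\phi_\nu$ assumed in Section \ref{sec:II} legitimizes such arbitrarily fine subdivision; alternatively, one can invoke Lyapunov's theorem on non-atomic vector measures, as in the argument of \cite{LuoZhang2008} noted after Lemma \ref{lemma:1}.

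Once $\mathcal{A}$ is known to be convex, the feasibility problem becomes a convex feasibility in $\mathbb{R}^2$ and a standard separating-hyperplane argument delivers zero duality gap. Equivalently, both $\mathbb{E}_\nu(q(\nu))$ and $\mathbb{E}_\nu(\hat p(\nu))$ are linear under time-sharing, so the time-sharing condition of \cite{YuLui2006} is met and Theorem 1 of \cite{YuLui2006} applies verbatim, yielding strong duality between (\ref{eqn:29:feasibility}) and its Lagrange dual.

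The main obstacle is the equality constraint (\ref{eqn:Pave}), which is nonconvex and non-smooth because of the $[\cdot]^+$ operator in $\hat p(\nu)$: the per-state map $q(\nu) \mapsto \hat p(\nu)$ is not linear and not even smooth. It is precisely the convexity of the \emph{averaged} achievable set $\mathcal{A}$, produced by fine time-sharing against the continuous fading law $\phi_\nu$, that rescues the argument and allows the standard convex-analysis strong duality to go through.
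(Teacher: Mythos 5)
Your proposal is correct and rests on the same idea as the paper's proof: the paper simply notes that the equality constraint (\ref{eqn:Pave}) is equivalent to the pair of inequalities $\mathbb{E}_\nu(\hat p(\nu))\le P$ and $\mathbb{E}_\nu(\hat p(\nu))\ge P$, observes that the time-sharing condition of \cite{YuLui2006} still holds (exactly the per-state time-sharing construction you use to show convexity of your set $\mathcal{A}$), and concludes strong duality. Your achievable-set/separating-hyperplane formulation is just a more explicit rendering of that same argument, so there is nothing to correct.
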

\begin{IEEEproof}
Note that the equality constraint in (\ref{eqn:Pave}) is indeed equivalent to two inequality constraints, i.e., $\mathbb{E}_\nu\left(\hat p(\nu)\right) \le P$ and $\mathbb{E}_\nu\left(\hat p(\nu)\right)\ge P$. As a result, the time sharing property still holds for problem (\ref{eqn:29:feasibility}) \cite{YuLui2006}. Therefore, strong duality holds between problem (\ref{eqn:29:feasibility}) and its dual problem. As a result, this lemma is proved.
\end{IEEEproof}

Next, we use the Lagrange duality method to check the feasibility of problem (\ref{eqn:29:feasibility}). Since the derivation procedure is similar to that for checking the feasibility of problem (P2.2), we omit the detail here, and leave it in Appendix \ref{app:D}.

\subsection{Optimizing $\{q(\nu)\}$ for Problem (P3) Under Any Given Feasible $\beta$}\label{sec:VB}

Next, we obtain the optimal cognitive jamming power solution $\{q(\nu)\}$ for problem (P3) under any given $\beta$ with $\beta^{\min}\le \beta \le \beta^{\max}$, for which the optimization problem is rewritten as:
\begin{align}
{\rm (P3.1)}:\max_{\{q(\nu)\}}&~\frac{\mathbb{E}_\nu(\hat r_0(\nu)X(\nu))}{\mathbb{E}_\nu(\hat r_0(\nu))}\nonumber\\
\mathrm{s.t.}~&~(\ref{eqn:P2:con1}),~(\ref{eqn:P2:con2}),~{\rm and}~(\ref{eqn:Pave}).\nonumber
\end{align}
By introducing an auxiliary variable $t$, problem (P3.1) is equivalently expressed as
\begin{align}
{\rm (P3.2)}:\max_{\{q(\nu)\},t}&~t\nonumber\\
\mathrm{s.t.}~&~{\mathbb{E}_\nu(\hat r_0(\nu)X(\nu))} \ge t {\mathbb{E}_\nu(\hat r_0(\nu))}\label{problem:p3:1:con1}\\
&~(\ref{eqn:P2:con1}),~(\ref{eqn:P2:con2}),~{\rm and}~(\ref{eqn:Pave}).\nonumber
\end{align}
To solve problem (P3.2), we use a similar approach as for solving (P2.1) in Section \ref{sec:III}, in which we first solve the following feasibility problem under any given $t$ and then search $t$ via bisection over the regime $[0,1]$.
\begin{align}
{\rm (P3.3)}:\mathrm{find}~&~\{q(\nu)\}\nonumber\\
\mathrm{s.t.}~&~(\ref{eqn:P2:con1}),~(\ref{eqn:P2:con2}),~(\ref{eqn:Pave}),~{\rm and}~(\ref{problem:p3:1:con1}).\nonumber
\end{align}
In the rest of this subsection, we focus on solving the feasibility problem (P3.3).

First, it can be verified similarly as for Lemmas \ref{lemma:1} and  \ref{lemma:2} that strong duality holds between problem (P3.3) and its dual problem. As a result, we use the Lagrange duality method to solve this problem. Let the dual variables associated with the constraints in (\ref{problem:p3:1:con1}), (\ref{eqn:P2:con1}), and (\ref{eqn:Pave}) be denoted by $\mu \ge 0$, $\lambda \ge 0$, $\zeta$, respectively. Then the partial Lagrangian of problem (P3.3) is denoted as
\begin{align}
\mathcal{L}_3&(\{q(\nu)\},\mu,\lambda,\zeta)
 = \mu\left({\mathbb{E}_\nu((X(\nu)-t)\hat r_0(\nu))}\right) \nonumber\\ &- \lambda \left(\mathbb{E}_{\nu}(q(\nu)) - Q\right) - \zeta\left(\mathbb{E}_\nu\left(\hat p(\nu)\right) - P\right).
\end{align}
As a result, the dual function of (P3.3) is expressed as
\begin{align}
f_3(\mu,\lambda,\zeta)=\max_{\{q(\nu)\ge 0\}}~\mathcal{L}_3(\{q(\nu)\},\mu,\lambda,\zeta).\label{eqn:P2:2:dual:function}
\end{align}
The dual problem is accordingly written as
\begin{align}
\min_{\mu\ge0,\lambda\ge 0,\zeta}~f_3(\mu,\lambda,\zeta).\label{eqn:P2:2:dual:problem}
\end{align}
Here, the optimal value of the dual problem (\ref{eqn:P2:2:dual:problem}) is zero when problem (P3.3) is feasible, while it approaches to $-\infty$ otherwise.

Similar to Proposition \ref{proposition:P2:1}, we have the following proposition, whose proof is omitted for brevity. This proposition can be used for checking the feasibility of problem (P3.3) later.
\begin{proposition}\label{proposition:6}
Problem (\ref{eqn:29:feasibility}) is infeasible if and only if there exist $\mu\ge 0$, $\lambda \ge 0$ and $\zeta$ such that $f_3(\mu, \lambda,\zeta) < 0$.
\end{proposition}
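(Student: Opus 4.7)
The plan is to mirror the proof of Proposition \ref{proposition:P2:1} in Appendix \ref{app:A}, exploiting two ingredients that have already been assembled in the excerpt: (i) strong duality between (P3.3) and its Lagrange dual, and (ii) positive-homogeneity of the dual function $f_3(\mu,\lambda,\zeta)$ in the dual variables. Homogeneity is immediate because $\mathcal{L}_3$ is linear in $(\mu,\lambda,\zeta)$ and the feasible set of the inner maximization over $\{q(\nu)\ge 0\}$ does not depend on the dual variables; hence $f_3(\alpha\mu,\alpha\lambda,\alpha\zeta)=\alpha f_3(\mu,\lambda,\zeta)$ for every $\alpha>0$, and the sign-unrestricted $\zeta$ causes no issue once the equality (\ref{eqn:Pave}) is split into two inequalities $\mathbb{E}_\nu(\hat p(\nu))\le P$ and $\mathbb{E}_\nu(\hat p(\nu))\ge P$ with nonnegative multipliers $\zeta^+,\zeta^-$.

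For the \emph{if} direction I would argue by contraposition. Assume (P3.3) admits a feasible $\{q(\nu)\}$. Evaluate $\mathcal{L}_3$ at this point: the first bracket $\mathbb{E}_\nu((X(\nu)-t)\hat r_0(\nu))$ is nonnegative by (\ref{problem:p3:1:con1}), the second bracket $\mathbb{E}_\nu(q(\nu))-Q$ is nonpositive by (\ref{eqn:P2:con1}), and the third bracket $\mathbb{E}_\nu(\hat p(\nu))-P$ vanishes by (\ref{eqn:Pave}). Since $\mu\ge 0$ and $\lambda\ge 0$, and the last term is identically zero regardless of the sign of $\zeta$, we obtain $\mathcal{L}_3(\{q(\nu)\},\mu,\lambda,\zeta)\ge 0$, and therefore $f_3(\mu,\lambda,\zeta)\ge 0$ for every admissible dual triple. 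This rules out the existence of $(\mu,\lambda,\zeta)$ with $f_3<0$.

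For the \emph{only if} direction I would recast (P3.3) as the trivial problem of maximizing the zero objective over its feasibility set, whose optimal value is $0$ when feasible and $-\infty$ when infeasible. By Lemma \ref{lemma:2} and the same time-sharing argument used in Lemma \ref{lemma:1} applied verbatim to (P3.3) (the construction time-shares $q^a$ and $q^b$ in each fading state, and the resulting indicator $X$, water-filling rate $\hat r_0$ and water-filling power $\hat p$ all inherit the convex-combination bounds required for the time-sharing condition of \cite{YuLui2006}), strong duality holds, so the dual optimum $\min_{\mu\ge 0,\lambda\ge 0,\zeta} f_3(\mu,\lambda,\zeta)$ equals $-\infty$ whenever (P3.3) is infeasible. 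Together with the positive-homogeneity of $f_3$, this is equivalent to the existence of \emph{some} triple $(\mu,\lambda,\zeta)$ with $\mu,\lambda\ge 0$ at which $f_3<0$, since any such triple can be scaled to drive $f_3$ to $-\infty$.

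The main obstacle is rigorously justifying strong duality, because (P3.3) is non-convex and the constraint (\ref{eqn:Pave}) is a nonlinear equality. The resolution is exactly the trick used in Lemma \ref{lemma:2}: split (\ref{eqn:Pave}) into two inequality constraints, verify that the time-sharing property of \cite{YuLui2006} still applies to (P3.3) (as sketched above), and conclude zero duality gap. After that step, the two directions reduce to a direct transcription of the argument for Proposition \ref{proposition:P2:1}, with the extra coordinate $\zeta\in\mathbb{R}$ contributing harmlessly on both sides.
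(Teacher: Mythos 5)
Your proposal follows the route the paper intends: the authors omit this proof and point back to Proposition \ref{proposition:P2:1}, whose proof in Appendix \ref{app:A} is exactly the template you transcribe. Your ``if'' direction is a correct verbatim adaptation: at any feasible point of (P3.3) the $\mu$-term of $\mathcal{L}_3$ is nonnegative by (\ref{problem:p3:1:con1}), the $\lambda$-term is nonnegative by (\ref{eqn:P2:con1}), and the $\zeta$-term vanishes by the equality (\ref{eqn:Pave}) regardless of the sign of $\zeta$, so $f_3\ge 0$ for every admissible triple. You also correctly read the statement as concerning the feasibility system of (P3.3) rather than problem (\ref{eqn:29:feasibility}) as literally written (the latter has no $\mu$-multiplier and its dual function is $\hat f(\lambda,\zeta)$, handled separately in Appendix \ref{app:D}); that reading is the right one.

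The one genuine gap is in your ``only if'' direction. You assert that strong duality for the zero-objective feasibility problem forces $\min_{\mu\ge 0,\lambda\ge 0,\zeta}f_3(\mu,\lambda,\zeta)=-\infty$ whenever (P3.3) is infeasible. But the zero-duality-gap theorem of \cite{YuLui2006} (and the splitting of (\ref{eqn:Pave}) into two inequalities as in Lemma \ref{lemma:2}) is established for \emph{feasible} instances; by itself it says nothing about the dual value of an infeasible instance, and concave perturbation functions taking the value $-\infty$ need separate treatment. The paper's template closes exactly this hole with the auxiliary problem (\ref{eqn:aux:problem}): fix a positive multiplier, move the offending average-power constraint into the objective, observe that the resulting problem is always feasible (e.g., by taking $q(\nu)$ large, which also drives $\hat r_0(\nu)\to 0$ so that (\ref{problem:p3:1:con1}) holds), note that its optimal value is strictly negative precisely because the original system is infeasible, and only then apply strong duality to this \emph{feasible} auxiliary problem to extract the remaining multipliers at which the full dual function is negative. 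You should replace your extended-value duality claim with this construction, adapted to carry both $\lambda$ and $\zeta$ (for (\ref{eqn:P2:con1}) and (\ref{eqn:Pave})) into the auxiliary objective. The rest of your argument, including the positive homogeneity $f_3(\alpha\mu,\alpha\lambda,\alpha\zeta)=\alpha f_3(\mu,\lambda,\zeta)$ and the decomposition $\zeta=\zeta^{+}-\zeta^{-}$, is sound but is a side remark rather than a substitute for that step.
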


Furthermore, we have the following proposition to find the optimal solution to problem (\ref{eqn:P2:2:dual:function}) to obtain $f_3(\mu,\lambda,\zeta)$ under given any $\mu\ge 0$, $\lambda\ge 0$, and $\zeta$.
\begin{proposition}\label{proposition:5}
The optimal solution to problem (\ref{eqn:P2:2:dual:function}) is given as
\begin{align}
&q_3^{(\mu,\lambda,\zeta)}(\nu) = \nonumber\\
&\left\{
\begin{array}{ll}
{\hat q}_1(\nu), &{\rm if}~\hat v_1(\nu) > \hat v_3(\nu)~\\&~~~{\rm and}~\left(\hat v_1(\nu) > \hat v_2(\nu)~{\rm or}~{\hat q}_2(\nu) \ge {\hat q}_1(\nu)\right)\\
{\hat q}_2(\nu), &{\rm if}~\hat v_2(\nu) > \hat v_1(\nu)~\\&~~~{\rm and}~\hat v_2(\nu) > \hat v_3(\nu)~{\rm and}~{\hat q}_2(\nu) < {\hat q}_1(\nu)\\
{\hat q}_3(\nu), &{\rm if}~\hat v_3(\nu) > \hat v_1(\nu)~\\&~~~{\rm and}~\left(\hat v_3(\nu) > \hat v_2(\nu)~{\rm or}~{\hat q}_2(\nu) \ge {\hat q}_1(\nu)\right)\\
\end{array}
\right.,\forall \nu.
\end{align}
Here,
\begin{align}\label{eqn:hatq1}
\hat q_1(\nu) \triangleq \left[\frac{g_0(\nu)}{\ln2\cdot \beta g_2(\nu)} - \frac{\sigma_0^2}{g_2(\nu)}\right]^+
\end{align}
denotes the jamming power such that the suspicious transmitter does not allocate any power over the fading state $\nu$ (due to the water-filling power allocation),
\begin{align}\label{eqn:hatq2}
\hat q_2(\nu) \triangleq \left[\left(\frac{g_0(\nu)}{g_1(\nu)}\sigma^2_1 - \sigma^2_0\right)\frac{1}{g_2(\nu)}\right]^+
\end{align}
means the minimum jamming power for the legitimate monitor to successfully eavesdrop the suspicious communication, and
\begin{align}\label{eqn:hatq3}
{\hat q}_3(\nu) = \left[\min(\hat q_1(\nu),\hat q_2(\nu),\hat q_4(\nu))\right]^+
\end{align}
represents the used jamming power when the legitimate monitor cannot successfully eavesdrop with
\begin{align}
\hat q_4(\nu) \triangleq \frac{g_0(\nu)t\mu}{\ln2\cdot(g_0(\nu)\lambda - \zeta g_2(\nu))} - \frac{\sigma_0^2}{g_2(\nu)}.\label{eqn:q4}
\end{align}
Accordingly, their resultant objective values of problem (\ref{eqn:P2:2:dual:function}) are respectively given by
\begin{align}
\hat v_1(\nu) &= -\lambda\hat q_1(\nu), \label{eqn:hatv1}\\
\hat v_2(\nu) &= \mu(1-t)\log_2\left( \frac{g_0(\nu)}{\ln 2 \cdot \beta(g_2(\nu)\hat q_2(\nu) + \sigma^2_0)} \right) \nonumber\\ &- \lambda \hat q_2(\nu) - \zeta \left(\frac{1}{\ln 2 \cdot \beta} - \frac{g_2(\nu)\hat q_2(\nu)+\sigma^2_0}{g_0(\nu)} \right), \label{eqn:hatv2} \\
\hat v_3(\nu) &= -t\mu\log_2\left( \frac{g_0(\nu)}{\ln 2 \cdot \beta(g_2(\nu)\hat q_3(\nu)  + \sigma^2_0)} \right)  \nonumber\\ & - \lambda \hat q_3(\nu)  - \zeta \left(\frac{1}{\ln 2 \cdot \beta} - \frac{g_2(\nu)\hat q_3(\nu) +\sigma^2_0}{g_0(\nu)} \right). \label{eqn:hatv3}
\end{align}
\end{proposition}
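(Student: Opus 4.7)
The key observation is that $\mathcal{L}_3$ decomposes as $\lambda Q+\zeta P$ plus an expectation of a per-state quantity, so solving (\ref{eqn:P2:2:dual:function}) reduces to the pointwise problem
\begin{equation*}
\max_{q\ge 0}\;\mu(X(\nu)-t)\,\hat r_0(\nu)-\lambda q-\zeta\hat p(\nu),
\end{equation*}
with $\hat p$ and $\hat r_0$ determined by $q$ via (\ref{eqn:solution:p}) and (\ref{eqn:hatr}). Because the indicator $X(\nu)$ in (\ref{eqn:indictor}) and the clipping $[\,\cdot\,]^+$ in (\ref{eqn:solution:p}) make the integrand non-smooth in $q$, a single derivative argument will not work. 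Instead, I would partition $[0,\infty)$ using two thresholds read off directly from the definitions: the water-filling cutoff $\hat q_1(\nu)$ in (\ref{eqn:hatq1}), the smallest $q$ forcing $\hat p=\hat r_0=0$; and the eavesdropping cutoff $\hat q_2(\nu)$ in (\ref{eqn:hatq2}), the smallest $q$ making $\gamma_1(\nu)\ge\gamma_0(\nu)$, hence $X=1$.

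Within each of the three resulting regimes the subproblem becomes smooth and tractable. In \emph{Regime~I} ($q\ge\hat q_1(\nu)$) the objective collapses to $-\lambda q$, maximized at the left boundary $q=\hat q_1$ with value $\hat v_1(\nu)$. In \emph{Regime~II} ($\hat q_2\le q<\hat q_1$, present only when $\hat q_2<\hat q_1$), substituting the water-filling expressions with $X=1$ yields $\mu(1-t)\hat r_0-\lambda q-\zeta\hat p$; since $\hat r_0=\mathrm{const}-\log_2(g_2(\nu)q+\sigma_0^2)$ is convex in $q$ and $\hat p$ is affine in $q$, the whole objective is convex and its maximum on this interval lies at an endpoint---the left endpoint gives $\hat v_2(\nu)$, and the right endpoint matches $\hat v_1(\nu)$ by continuity of $\hat p$ and $\hat r_0$ at $\hat q_1$. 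In \emph{Regime~III} ($0\le q<\min(\hat q_1,\hat q_2)$) we have $X=0$ and the analogous objective $-t\mu\hat r_0-\lambda q-\zeta\hat p$ is concave; its interior stationarity condition is solved in closed form, producing exactly $\hat q_4(\nu)$ in (\ref{eqn:q4}), and projection onto the feasible set $[0,\min(\hat q_1,\hat q_2)]$ delivers $\hat q_3(\nu)$ in (\ref{eqn:hatq3}) with value $\hat v_3(\nu)$.

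The per-state maximizer is then the candidate among $(\hat q_1,\hat v_1)$, $(\hat q_2,\hat v_2)$, $(\hat q_3,\hat v_3)$ attaining the largest value, and a short case enumeration confirms that the three-branch rule in the statement of the proposition implements this argmax correctly: when $\hat q_2<\hat q_1$ all three candidates are admissible and one simply picks the largest; when $\hat q_2\ge\hat q_1$ Regime~II is empty so $\hat q_2$ must be disqualified, which is precisely what the \emph{``or $\hat q_2\ge\hat q_1$''} disjuncts in the first and third branches do. I expect the main obstacle to be the Regime~II convexity step---verifying the sign of $\partial^2\hat r_0/\partial q^2$ and ruling out an interior maximum on $(\hat q_2,\hat q_1)$ once $X=1$ has been secured---after which the identification of $\hat q_3$ by projection of $\hat q_4$ is routine, and the closed-form expressions (\ref{eqn:hatv1})--(\ref{eqn:hatv3}) follow by direct substitution of each candidate $q$ into the per-state objective.
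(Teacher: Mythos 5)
Your proposal is correct and follows essentially the same route as the paper's Appendix E: the same per-state decomposition, the same three regimes delimited by $\hat q_1(\nu)$ and $\hat q_2(\nu)$, the endpoint argument via convexity in the middle regime, the first-order condition yielding $\hat q_4(\nu)$ with projection to $\hat q_3(\nu)$ in the last regime, and the final comparison of $\hat v_1$, $\hat v_2$, $\hat v_3$ with $\hat v_2$ disqualified when $\hat q_2(\nu)\ge\hat q_1(\nu)$. No gaps to report.
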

\begin{IEEEproof}
See Appendix \ref{app:E}.
\end{IEEEproof}
With the optimal solution to problem (\ref{eqn:P2:2:dual:function}) given in Proposition \ref{proposition:5} together with Proposition \ref{proposition:6}, we can then apply the ellipsoid method to solve the dual problem (\ref{eqn:P2:2:dual:problem}) by using the fact that the subgradient of $f_3(\mu,\lambda,\zeta)$ is
\begin{align*}
\mv s_3(\mu,\lambda,\zeta) = &\left[\mathbb{E}_\nu\left(\left(X^{(\mu,\lambda,\zeta)}(\nu) - t\right) \hat{r}_0^{(\mu,\lambda,\zeta)}(\nu)\right),\right. \nonumber\\
&\left.Q - {\mathbb{E}_\nu(q_3^{(\mu,\lambda,\zeta)}(\nu))}, P - \mathbb{E}_\nu\left(\hat p^{(\mu,\lambda,\zeta)}(\nu)\right)\right]^T.
\end{align*}
Here, $\{\hat{r}_0^{(\mu,\lambda,\zeta)}(\nu)\}$, $\{X^{(\mu,\lambda,\zeta)}(\nu)\}$, and $\{\hat p^{(\mu,\lambda,\zeta)}(\nu)\}$ denote the corresponding $\{\hat r_0(\nu)\}$ in (\ref{eqn:hatr}), $\{X(\nu)\}$ in (\ref{eqn:indictor}), and $\{\hat p(\nu)\}$ in (\ref{eqn:solution:p}) under given $\{q_3^{(\mu,\lambda,\zeta)}(\nu)\}$, respectively.

Note that the detailed complete algorithm for solving problem (P3.3) is similar to Algorithm 1, and thus is omitted for brevity. Therefore, problem (P3.3) is finally solved. When problem (P3.3) is feasible, we denote its optimal solution as $\{q_3^{\star}(\nu)\}$.

Finally, we use the bisection search to find the optimal $t$ for problem (P3.2) under any given $\beta \in [\beta^{\min},\beta^{\max}]$, and apply the exhaustive search to obtain the optimal $\beta$ for problem (P3). Let the optimal $\beta$ for problem (P3) and the correspondingly optimal $t$ for problem (P3.2) be denoted by $\beta^{**}$ and $t^{**}$, respectively. As a result, the accordingly obtained $\{q_3^{\star}(\nu)\}$ becomes the optimal cognitive jamming power solution to (P3), denoted by $\{q_3^{*}(\nu)\}$. Therefore, problem (P3) is solved.

\begin{figure}
\centering
 \epsfxsize=1\linewidth
    \includegraphics[width=7cm]{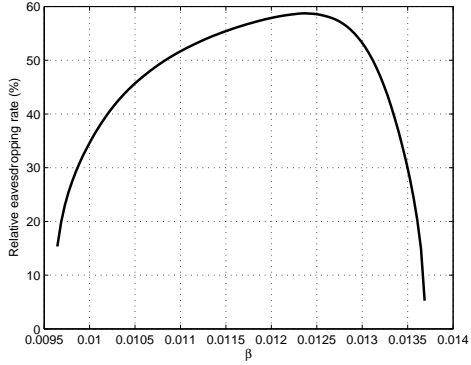}
\caption{The relative eavesdropping rate versus the variable $\beta$ when the suspicious transmitter adopts the water-filling power allocation in the delay-tolerant case.} \label{fig:5}\vspace{-1em}
\end{figure}

\begin{remark}
To provide more insight, Fig. \ref{fig:5} shows the obtained relative eavesdropping rate (the optimal value of (P3.1)) under given $\beta$ versus the variable $\beta$ (in the range between $\beta^{\min}$ and $\beta^{\max}$), where the system parameters are set as in Section \ref{sec:VI} and the average jamming power at the legitimate monitor is set to be $Q = 20$ dB. It is observed that the relative eavesdropping rate first increases and then decreases as a function of $\beta$. Note that we have also conducted simulations under other setups which are not plotted here, and such a property is also shown to be valid under these tests, although it is very difficult to rigorously prove it. This property implies that a simple bisection (instead of the complex exhaustive search) may be sufficient to find the optimal $\beta$ for problem (P3). As a result, the complexity of solving problem (P3) can be significantly reduced.
\end{remark}


\subsection{Comparison Among Different Optimal Cognitive Jamming Solutions}

\begin{figure*}
\begin{minipage}[t]{0.32\linewidth}
\centering
\includegraphics[width=\textwidth]{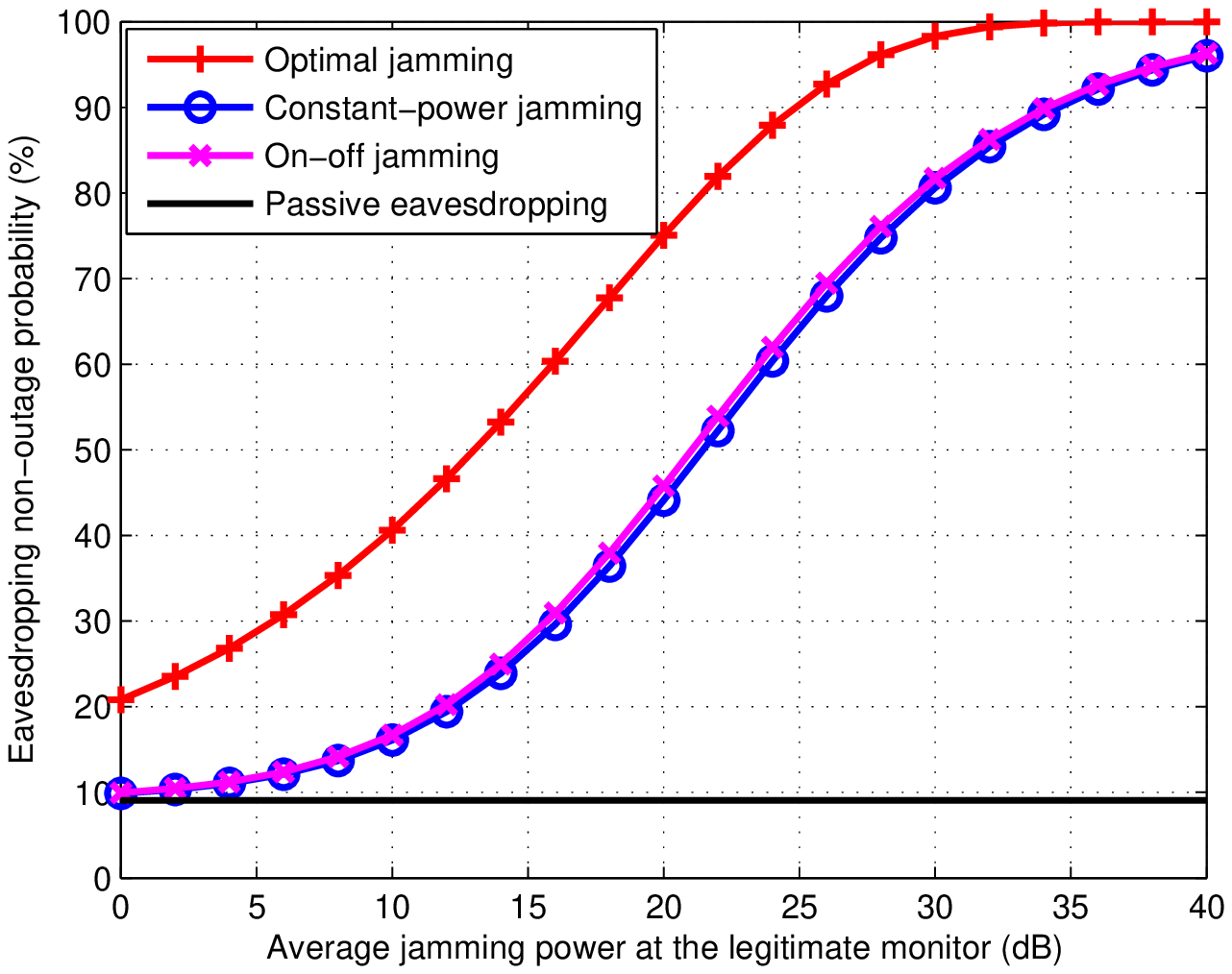}
\caption{The eavesdropping non-outage probability $\mathbb{E}_{\nu}(X(\nu))$ versus the average jamming power $Q$ at the legitimate monitor in delay-sensitive suspicious applications.}\label{fig:2}
\end{minipage}
\hfill
\begin{minipage}[t]{0.32\linewidth}
\centering
\includegraphics[width=\textwidth]{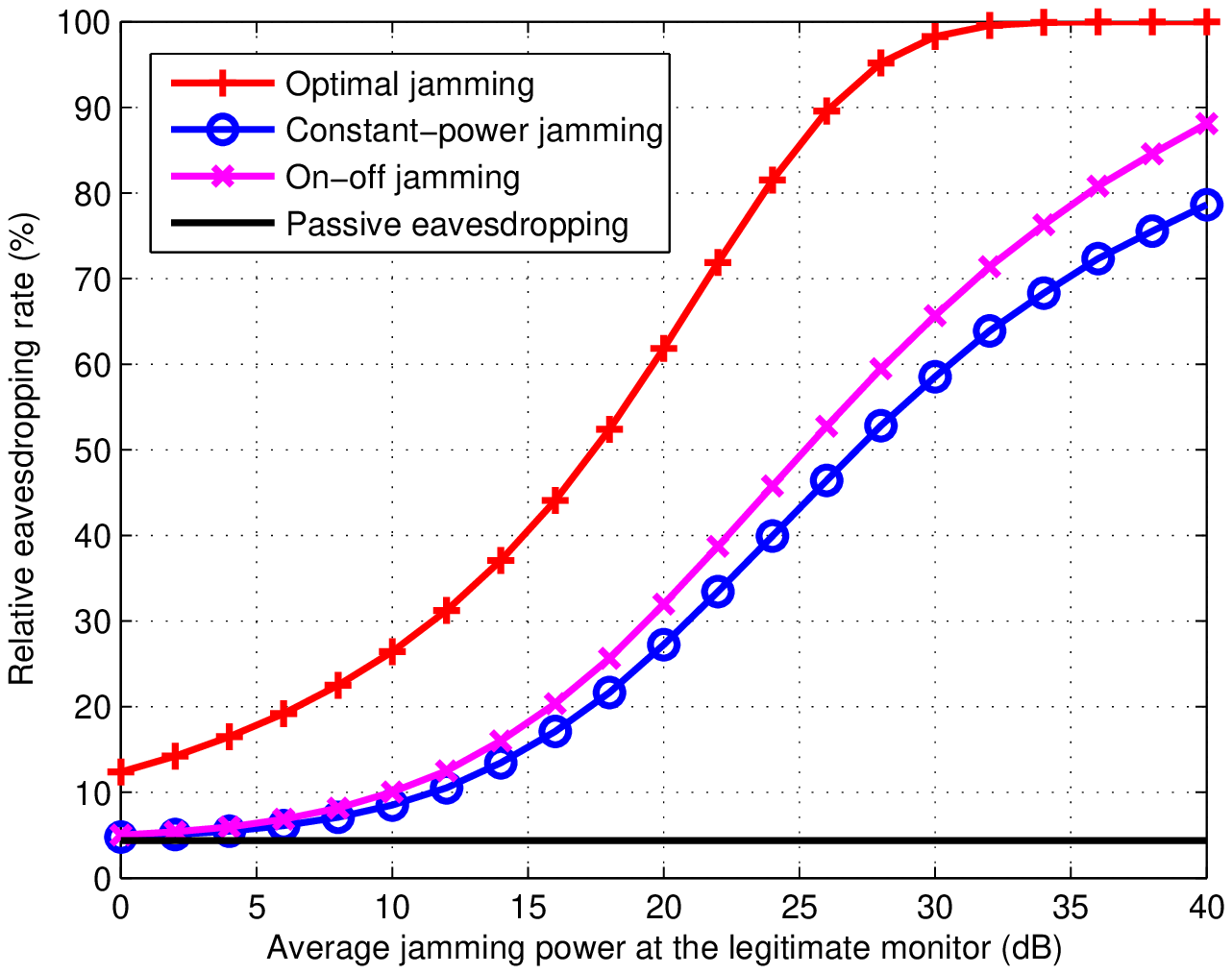}
\caption{The relative eavesdropping rate $\frac{\mathbb{E}_\nu(\bar r_0(\nu)X(\nu))}{\mathbb{E}_\nu(\bar r_0(\nu))}$ versus the average jamming power $Q$ at the legitimate monitor in delay-tolerant suspicious applications, where the suspicious transmitter employs the fixed power transmission.}\label{fig:3}
\end{minipage}
\hfill
\begin{minipage}[t]{0.32\linewidth}
\centering
\includegraphics[width=\textwidth]{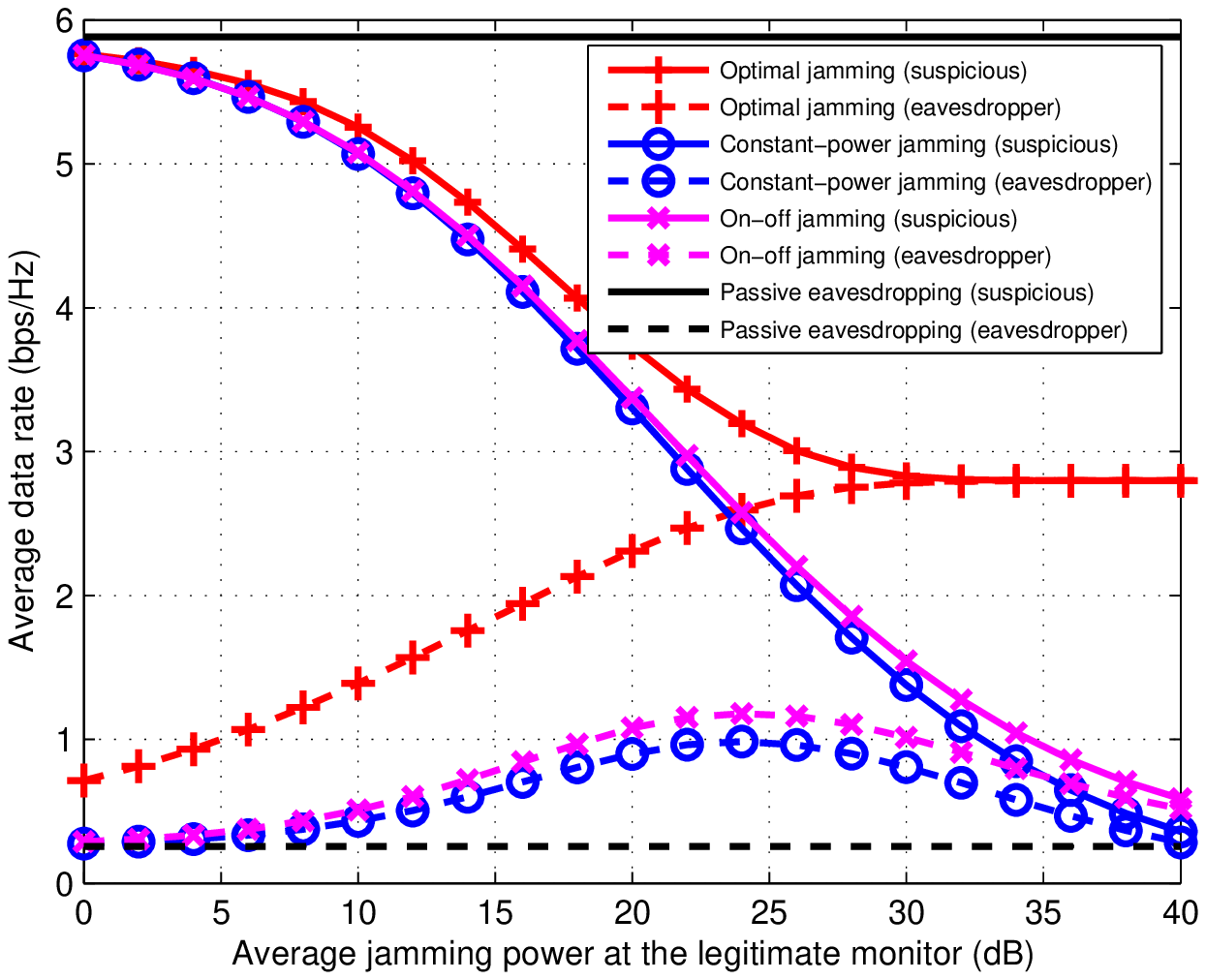}
\caption{The average suspicious communication rate $\mathbb{E}(\bar r_0(\nu))$ and the average eavesdropping rate $\mathbb{E}(\bar r_0(\nu)X(\nu))$) versus the average jamming power $Q$ at the legitimate monitor in delay-tolerant suspicious applications, where the suspicious transmitter employs the fixed power transmission.}\label{fig:4}
\end{minipage}\vspace{-1em}
\end{figure*}

In this subsection, we compare the optimal cognitive jamming solutions under different application scenarios, i.e., $\left\{q_1^*(\nu)\right\}$, $\left\{q_2^*(\nu)\right\}$, and $\left\{q_3^*(\nu)\right\}$ for problems (P1), (P2), and (P3), respectively.

First, consider each fading state $\nu$ with $\left(\frac{g_0(\nu)}{g_1(\nu)}\sigma^2_1 - \sigma^2_0\right)\frac{1}{g_2(\nu)} \le 0$, where the eavesdropping link is better than the suspicious communication link. In this case, since eavesdropping is always successful and no jamming is required, the three optimal solutions are identical, i.e., $q_3^*(\nu) = q_2^*(\nu) = q_1^*(\nu) = 0$.

Next, consider each of the other fading states $\nu$ with $\left(\frac{g_0(\nu)}{g_1(\nu)}\sigma^2_1 - \sigma^2_0\right)\frac{1}{g_2(\nu)} > 0$, where the eavesdropping link is worse than the suspicious communication link and thus the eavesdropping will not be successful without proactive jamming. In this case, the optimal solution $\{q_1^*(\nu)\}$  for delay-sensitive applications are different from $\{q_2^*(\nu)\}$ and $\{q_3^*(\nu)\}$ for delay-tolerant applications. Specifically, in delay-sensitive applications, the legitimate monitor only jams over the desired fading states when it can successfully eavesdrop (after jamming); while in the delay-tolerant case, the legitimate monitor may also jam over the undesired fading states when it cannot successfully eavesdrop (even after jamming), since it helps reduce the communication rate of the suspicious link in these fading states and therefore increase the percentage of successful eavesdropping rate in the desired channel states.

Finally, note that the optimal solutions $\{q_2^*(\nu)\}$ and $\{q_3^*(\nu)\}$ are also different from each other at each fading state $\nu$ with $\left(\frac{g_0(\nu)}{g_1(\nu)}\sigma^2_1 - \sigma^2_0\right)\frac{1}{g_2(\nu)} > 0$. For example, when the suspicious transmitter can adjust its transmit power via water-filling, the legitimate monitor may choose between the jamming power $\hat q_1(\nu)$ such that the suspicious transmitter does not allocate power over the fading state $\nu$, versus $\hat q_2(\nu)$ such that the legitimate monitor can eavesdrop the suspicious link over that fading state. In contrast, when the suspicious transmitter adopts fixed power transmission, the legitimate monitor does not need to consider the first option. It is also worth noting that it is difficult for us to analytically compare the resulting relative eavesdropping rate under fixed power transmission with that under water-filling power allocation. As will be shown in the numerical results later (see Fig. \ref{fig:8} in Section \ref{sec:VI}), the relative eavesdropping rate under fixed power transmission is higher than that under water-filling power allocation. This implies that due to the potential water-filling power allocation at the suspicious transmitter, in general higher average jamming power is required for the legitimate monitor to achieve the same relative eavesdropping rate as in the case with fixed power transmission.

\section{Numerical Results}\label{sec:VI}\vspace{-0em}

%
In this section, we provide numerical results to validate the performance of our proposed proactive eavesdropping via cognitive jamming approach, in terms of the eavesdropping non-outage probability $\mathbb{E}_{\nu}(X(\nu))$ and the relative eavesdropping rate $\frac{\mathbb{E}_\nu(\bar r_0(\nu)X(\nu))}{\mathbb{E}_\nu(\bar r_0(\nu))}$. For comparison, we consider three benchmark schemes as follows: 1) {\it Proactive eavesdropping with constant-power jamming}: in this scheme, the legitimate monitor uses constant jamming power over all fading states, i.e., $q(\nu) = Q, \forall \nu$. 2) {\it Proactive eavesdropping with ``on-off'' jamming}: in this scheme, the legitimate monitor does not send any jamming signal over the fading state $\nu$ with $\frac{g_0(\nu)}{\sigma_0^2} \le \frac{g_1(\nu)}{\sigma_1^2}$ (i.e., the eavesdropping is successful even without any jamming), and allocates the jamming power equally over all the other fading states. 3) {\it Passive eavesdropping without jamming}: in this scheme, the legitimate monitor does not send any jamming signal, i.e., $q(\nu) = 0, \forall \nu$.

In the simulation, we consider Rayleigh fading and set the channel coefficients $h_0(\nu)$, $h_1(\nu)$, and $h_2(\nu)$ to be independent CSCG random variables with mean zero and variances $1,0.1$, and $0.1$, respectively, $\forall \nu$, by assuming that the legitimate monitor is far away from the suspicious transmitter and receiver as compared to their distance. Furthermore, we set the transmit power at the suspicious transmitter to be $P= 20$ dB unless otherwise stated,  and the noise powers to be $\sigma_0^2 = \sigma_1^2 = 1$. Here, the system parameters are normalized without loss of generality, and can be easily extended to the case with realistic parameters.

First, consider delay-sensitive suspicious applications. Fig. \ref{fig:2} shows the eavesdropping non-outage probability $\mathbb{E}_{\nu}(X(\nu))$ versus the average jamming power $Q$ at the legitimate monitor. It is observed that the proactive eavesdropping (with both cognitive jamming and constant-power jamming) achieves higher eavesdropping non-outage probability than the passive eavesdropping, while the cognitive jamming with optimal power control outperforms the constant-power jamming.

\begin{figure*}
\begin{minipage}[t]{0.32\linewidth}
\centering
\includegraphics[width=\textwidth]{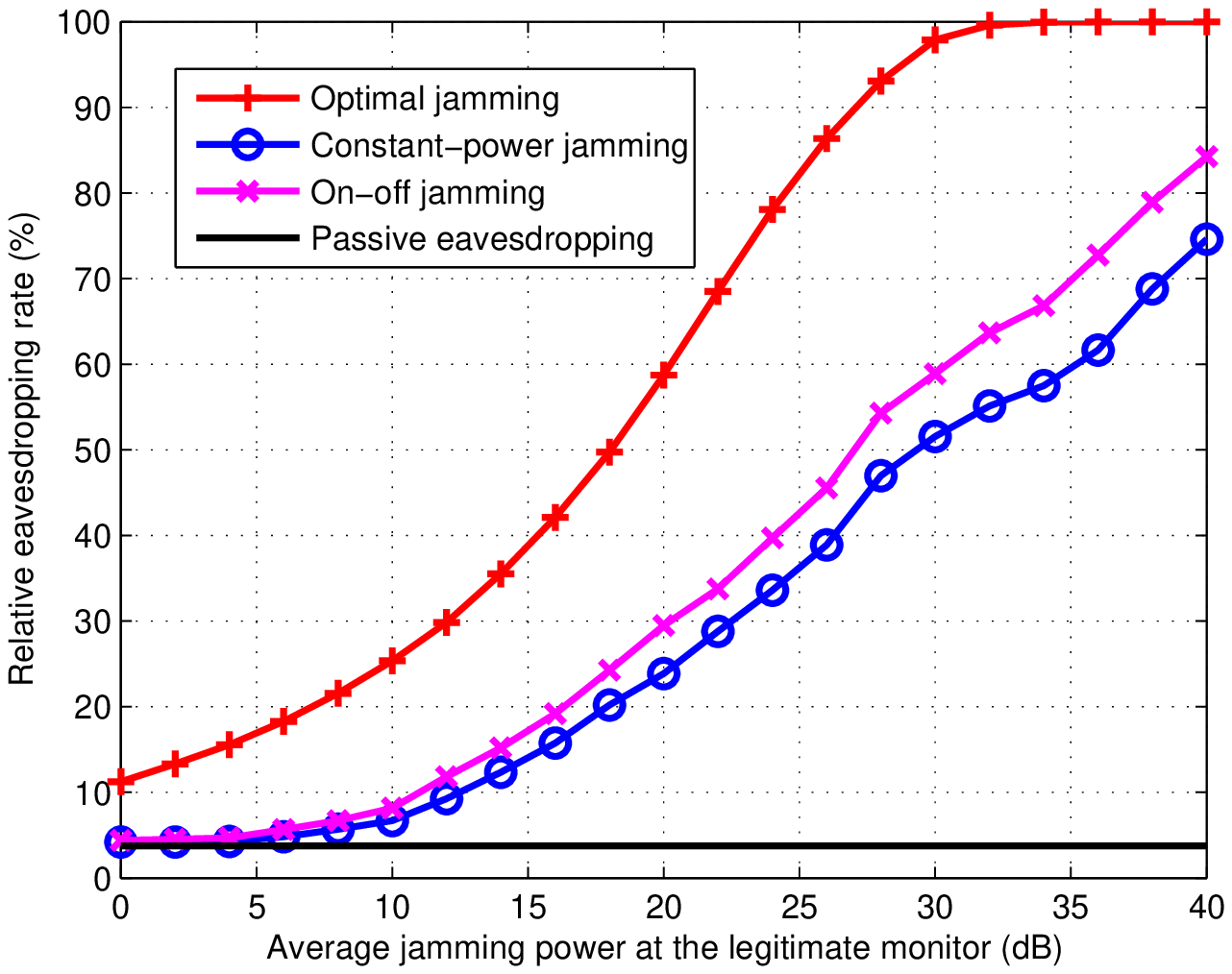}
\caption{The relative eavesdropping rate $\frac{\mathbb{E}_\nu(\hat r_0(\nu)X(\nu))}{\mathbb{E}_\nu(\hat r_0(\nu))}$ versus the average jamming power $Q$ at the legitimate monitor in delay-tolerant suspicious applications, where the suspicious transmitter employs the water-filling power allocation.}\label{fig:6}
\end{minipage}
\hfill
\begin{minipage}[t]{0.32\linewidth}
\centering
\includegraphics[width=\textwidth]{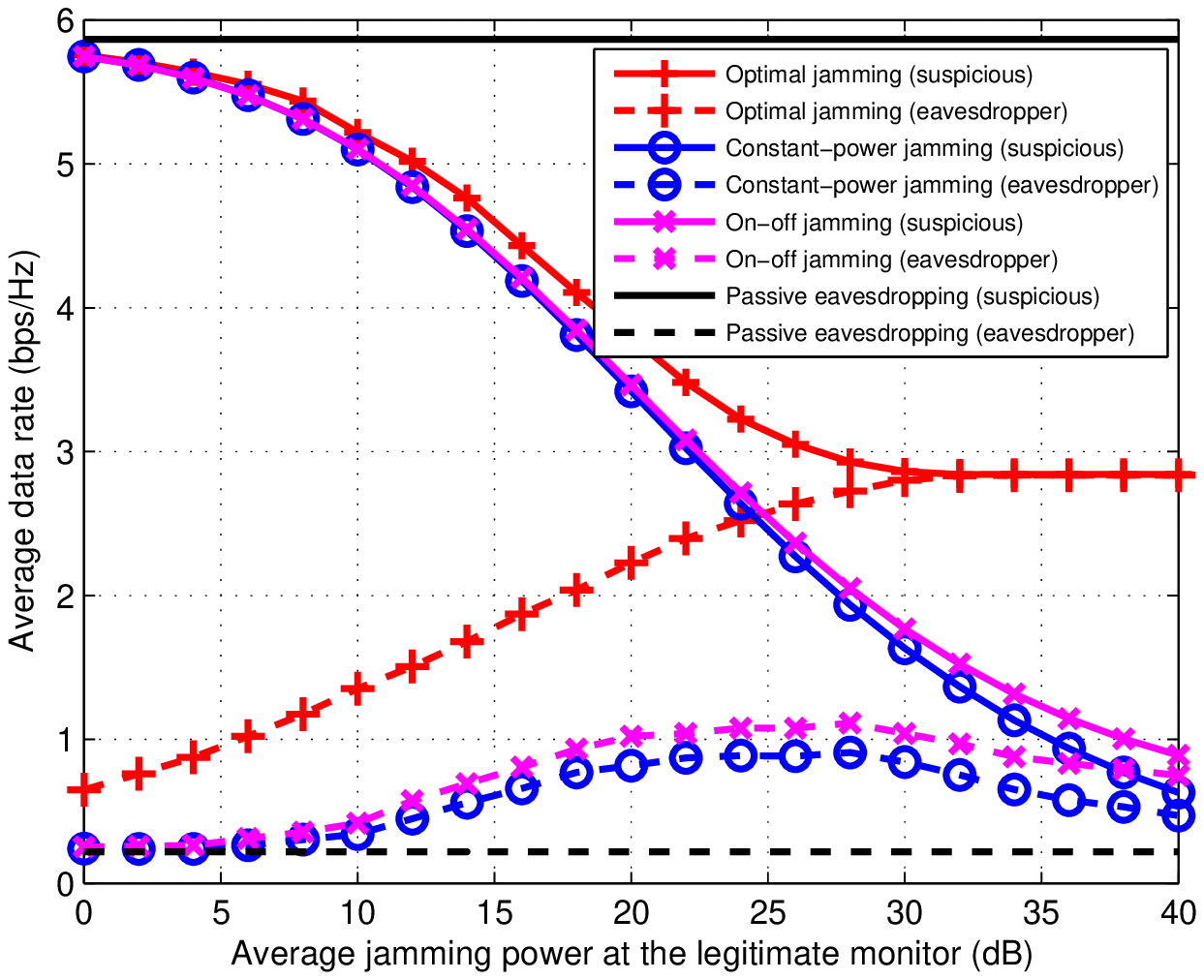}
\caption{The average suspicious communication rate $\mathbb{E}(\hat r_0(\nu))$ and the average eavesdropping rate $\mathbb{E}(\hat r_0(\nu)X(\nu))$) versus the average jamming power $Q$ at the legitimate monitor in delay-tolerant suspicious applications, where the suspicious transmitter employs the water-filling power allocation.}\label{fig:7}
\end{minipage}
\hfill
\begin{minipage}[t]{0.32\linewidth}
\centering
\includegraphics[width=\textwidth]{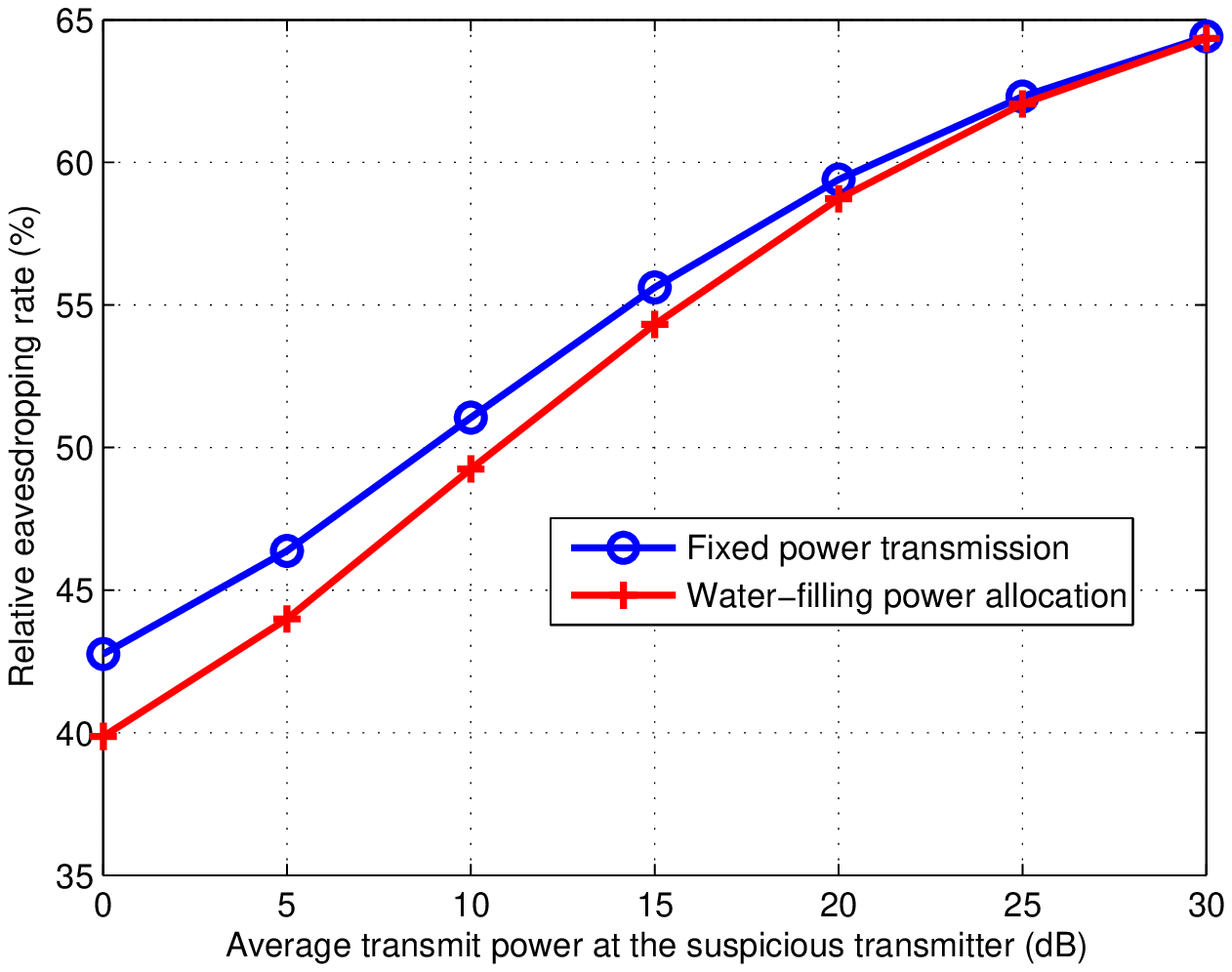}
\caption{The relative eavesdropping rates versus the average transmit power $P$ at the suspicious transmitter in delay-tolerant suspicious applications.}\label{fig:8}
\end{minipage}\vspace{-1em}
\end{figure*}
%

Next, consider that the suspicious transmitter employs fixed power transmission in delay-tolerant suspicious applications. Fig. \ref{fig:3} shows the relative eavesdropping rate $\frac{\mathbb{E}_\nu(\bar r_0(\nu)X(\nu))}{\mathbb{E}_\nu(\bar r_0(\nu))}$ versus the average jamming power $Q$ at the legitimate monitor. The proactive eavesdropping via cognitive jamming achieves the best eavesdropping performance in terms of the relative eavesdropping rate. Furthermore, Fig. \ref{fig:4} shows the average suspicious communication rate $\mathbb{E}(\bar r_0(\nu))$ and the average eavesdropping rate $\mathbb{E}(\bar r_0(\nu) X(\nu))$, respectively. It is observed that as compared to the constant-power jamming, the cognitive jamming with optimal power control achieves higher average eavesdropping rate. This shows that by utilizing the optimal cognitive jamming, the legitimate monitor can not only eavesdrop a higher percentage of data bits but also a larger volume of data. This validates the advantages of the proposed proactive eavesdropping via cognitive jamming with the optimal power control.
%
%

In addition, consider the suspicious transmitter employs water-filling power allocation in delay-tolerant suspicious applications. Fig. \ref{fig:6} shows the relative eavesdropping rate $\frac{\mathbb{E}_\nu(\hat r_0(\nu)X(\nu))}{\mathbb{E}_\nu(\hat r_0(\nu))}$ versus the average jamming power $Q$ at the legitimate monitor, and Fig. \ref{fig:7} shows the average suspicious communication rate $\mathbb{E}(\hat r_0(\nu))$ and the average eavesdropping rate $\mathbb{E}(\hat r_0(\nu)X(\nu))$, respectively. The two figures can be similarly explained as for Figs. \ref{fig:3} and \ref{fig:4}, respectively.

Finally, Fig. \ref{fig:8} shows the relative eavesdropping rates versus the average transmit power $P$ at the suspicious transmitter in delay-tolerant suspicious applications, where both fixed power transmission and water-filling power allocation at the suspicious transmitter are considered. Here, the average jamming power at the legitimate monitor is set to be $Q = 20$ dB. It is observed that under the same average jamming power, the relative eavesdropping rate under the fixed power transmission at the suspicious transmitter is higher than that under the water-filling power allocation, especially when the average transmit power $P$ at the suspicious transmitter is small. This shows that the dynamics of water-filling power allocation at the suspicious transmitter degrades the proactive eavesdropping performance at the legitimate monitor, and the legitimate monitor needs to use higher average jamming power when the suspicious transmitter adopts water-filling power allocation to achieve the same performance as that under fixed power transmission.

\section{Practical Implementation of Proactive Eavesdropping}\label{sec:Pro:Eav}

Preceding sections focused on characterizing the fundamental information-theoretical limits of proactive eavesdropping under the assumption with perfect SIC and global CSI at the legitimate monitor. In this section, we consider a more practical case with residual SI and local CSI only, and accordingly design an efficient {\it online} cognitive jamming scheme, inspired by the optimal cognitive jamming above. In the following, we particularly focus on the eavesdropping non-outage probability maximization problem for delay-sensitive applications. Similar ideas and analysis can be used to address the relative eavesdropping rate maximization problems for delay-tolerant applications, but the details are omitted here due to space limitation.

\subsection{Eavesdropping Non-Outage Probability Maximization in the Case with Residual SI}

First, we investigate the effect of the residual SI at the legitimate monitor by assuming the loop-back channel power gain from the jamming to the eavesdropping antennas as $\tilde\phi(\nu)$ in the fading state $\nu$. Suppose that the SIC at the legitimate monitor achieves an SI reduction of $\varphi$ (in dB). Then, the residual SI in this fading state is given as $\tilde\phi(\nu)q(\nu)/\varphi = \phi(\nu)q(\nu)$, where $\phi(\nu) = \tilde\phi(\nu)/\varphi$ denotes the effective loop-back channel power gain after SIC. As demonstrated in practical full-duplex radios \cite{fullduplex}, jointly using analog and digital SIC methods can achieve up to 110 dB SI reduction. With the residual SI, the SNR at the legitimate monitor in (\ref{eqn:gamma1}) can be revised as the following SINR:
\begin{align}
\tilde\gamma_1(\nu) = \frac{g_1(\nu)p(\nu)}{\phi(\nu) q(\nu) + \sigma^2_1}.\label{eqn:hat:gamma_1}
\end{align}
In this case, the successful eavesdropping indicator function in (\ref{eqn:indictor}) is rewritten as
\begin{align}
\tilde X(\nu) = \left\{
\begin{array}{ll}
1, & {\rm if}~\tilde\gamma_1(\nu) \ge \gamma_0(\nu)\\
0, & {\rm otherwise}.\\
\end{array}
\right.
\end{align}
The non-outage eavesdropping non-outage probability maximization problem (P1) is thus re-expressed as
\begin{align*}
\mathrm{(P4)}:\max_{\{q(\nu) \ge 0\}}~&~ \mathbb{E}_{\nu}\left(\tilde X(\nu)\right)\\
\mathrm{s.t.}~&~(\ref{eqn:P2:con1}).
\end{align*}
We have the following proposition.
\begin{proposition}\label{proposition:P4}
The optimal solution to problem (P4) is given as
\begin{align}
&q_4^{*}(\nu) = \nonumber\\
&\left\{
\begin{array}{ll}
 \frac{g_0(\nu) \sigma^2_1 - g_1(\nu) \sigma^2_0}{g_1(\nu)g_2(\nu) - g_0(\nu)\phi(\nu)},& {\rm if}~g_1(\nu)g_2(\nu) - g_0(\nu)\phi(\nu) > 0~ \\&{\rm and}~0 <  \frac{g_0(\nu) \sigma^2_1 - g_1(\nu) \sigma^2_0}{g_1(\nu)g_2(\nu) - g_0(\nu)\phi(\nu)} < \frac{1}{\lambda_4^*}\\
0,&{\rm otherwise},
\end{array}
\right.
\end{align}
where $\lambda_4^*$ denotes the optimal dual variable associated with the constraint (\ref{eqn:P2:con1}).
\end{proposition}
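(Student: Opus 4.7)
The plan is to follow the template of Proposition \ref{proposition:P1} almost verbatim, since problem (P4) has exactly the same structural form as (P1): maximize the expectation of a $\{0,1\}$-valued indicator over jamming power profiles subject to a single average-power constraint. First, I would verify that the time-sharing condition of \cite{YuLui2006} still holds for (P4) by the identical argument used in Lemma \ref{lemma:1} (which relied only on the ability to convexly combine two jamming power schedules in time and did not use any specific form of the indicator). This guarantees zero duality gap, so (P4) can be solved by Lagrange duality.

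Next I would form the partial Lagrangian $\mathcal{L}_4(\{q(\nu)\},\lambda) = \mathbb{E}_\nu(\tilde X(\nu)) - \lambda(\mathbb{E}_\nu(q(\nu)) - Q)$ with $\lambda \ge 0$ and decompose the inner maximization per fading state into $\max_{q(\nu)\ge 0}\,\tilde X(\nu) - \lambda q(\nu)$. The core computation is then to solve the inequality $\tilde\gamma_1(\nu) \ge \gamma_0(\nu)$ for $q(\nu)$, which rearranges to
\begin{equation*}
q(\nu)\bigl(g_1(\nu)g_2(\nu) - g_0(\nu)\phi(\nu)\bigr) \ge g_0(\nu)\sigma_1^2 - g_1(\nu)\sigma_0^2.
\end{equation*}
I would then split into cases on the sign of $g_1g_2 - g_0\phi$. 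When this quantity is positive, successful eavesdropping costs at least the threshold power $q_{\min}(\nu) \triangleq \frac{g_0\sigma_1^2 - g_1\sigma_0^2}{g_1g_2 - g_0\phi}$ (with $q_{\min} \le 0$ meaning no jamming is needed), and the per-state maximizer is $q_{\min}(\nu)$ if $1 - \lambda q_{\min}(\nu) > 0$ (equivalently $q_{\min}(\nu) < 1/\lambda$) and zero otherwise. When $g_1g_2 - g_0\phi \le 0$, increasing jamming cannot make $\tilde X$ switch from $0$ to $1$ (and if $\tilde X(\nu)=1$ already at $q=0$ the best choice is clearly $q(\nu)=0$), so $q^*(\nu)=0$. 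Collecting the cases reproduces the expression in the proposition.

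Finally I would pin down $\lambda_4^*$ by the standard complementary-slackness / bisection argument: if $\lambda \to 0$ already satisfies the average-power constraint with the un-thresholded choice $q(\nu) = q_{\min}(\nu)\mathbf{1}\{q_{\min}(\nu)>0,\,g_1g_2-g_0\phi>0\}$, then $\lambda_4^*=0$; otherwise, since $q_4^*(\nu)$ is monotonically non-increasing in $\lambda$ and $\mathbb{E}_\nu(q_4^*(\nu))$ is correspondingly monotone, a unique $\lambda_4^*>0$ is determined by $\mathbb{E}_\nu(q_4^*(\nu)) = Q$. The only delicate point compared with Proposition \ref{proposition:P1} is the sign case $g_1g_2 - g_0\phi \le 0$ introduced by the residual SI: this is where the denominator can vanish or flip sign, and I would handle it explicitly by noting that in this regime the map $q \mapsto \tilde\gamma_1(\nu)/\gamma_0(\nu)$ is non-increasing, so jamming never helps and the ``otherwise'' branch $q_4^*(\nu)=0$ applies. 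With that subtlety addressed, the rest of the argument is a direct transcription of the proof of Proposition \ref{proposition:P1}.
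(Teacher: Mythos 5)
Your proposal is correct and takes essentially the same route as the paper: the paper's own proof of this proposition is only a two-line sketch deferring to the Lagrange duality argument of Proposition \ref{proposition:P1}, and your time-sharing check, per-state decomposition, the rearrangement of $\tilde\gamma_1(\nu)\ge\gamma_0(\nu)$ into $q(\nu)\left(g_1(\nu)g_2(\nu)-g_0(\nu)\phi(\nu)\right)\ge g_0(\nu)\sigma_1^2-g_1(\nu)\sigma_0^2$, and the complementary-slackness determination of $\lambda_4^*$ supply exactly the omitted details. One minor caution: your closing justification that $q\mapsto\tilde\gamma_1(\nu)/\gamma_0(\nu)$ is non-increasing whenever $g_1(\nu)g_2(\nu)-g_0(\nu)\phi(\nu)\le 0$ is not literally true (the sign of that ratio's derivative is governed by $g_2(\nu)\sigma_1^2-\phi(\nu)\sigma_0^2$), but this is harmless because your earlier linear-inequality argument already establishes correctly that no $q(\nu)\ge 0$ can flip $\tilde X(\nu)$ from $0$ to $1$ in that regime, so the conclusion $q_4^*(\nu)=0$ stands.
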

\begin{proof}[Sketch of Proof]
Note that strong duality still holds between (P4) and its Lagrange dual problem. Therefore, this proposition can be verified by applying the Lagrange duality method to solve (P4), similarly as in Proposition \ref{proposition:P1}. The details are omitted here for brevity.
\end{proof}

Proposition \ref{proposition:P4} shows that at each fading state $\nu$, if the eavesdropping link is weaker than the suspicious link (i.e., $g_0(\nu) \sigma^2_1 - g_1(\nu) \sigma^2_0 > 0$), then the minimum jamming power for the eavesdropping to be successful is given as $\tilde q^*(\nu)\triangleq \frac{g_0(\nu) \sigma^2_1 - g_1(\nu) \sigma^2_0}{g_1(\nu)g_2(\nu) - g_0(\nu)\phi(\nu)}$, which is valid only when the residual SI is not so strong (i.e., $g_1(\nu)g_2(\nu) - g_0(\nu)\phi(\nu) > 0$ holds). By comparing Proposition \ref{proposition:P1} versus Proposition \ref{proposition:P4}, we observe that threshold-based jamming power allocations are optimal to maximize the eavesdropping non-outage probability in both cases without and with residual SI, where the jamming power cannot exceed the thresholds $\frac{1}{\lambda_1^*}$ and $\frac{1}{\lambda_4^*}$, respectively.

\subsection{Online Cognitive Jamming Under Practical Assumptions}

Inspired by the optimal threshold-based power allocation in Proposition \ref{proposition:P4}, we then consider online cognitive jamming strategies under the following practical assumptions. First, instead of considering the case with infinite fading states, we consider a finite horizon of $N$ time blocks, with wireless channels being constant over each block. Accordingly, we use $\nu \in \{1,\ldots,N\}$ to denote the index of the time block in this subsection. Next, at each time block, the legitimate monitor does not know the suspicious channel $g_0(\nu)$ or the jamming channel $g_2(\nu)$, but it knows the eavesdropping channel $g_1(\nu)$ and the effective loop-back channel $\phi(\nu)$ via channel estimation based on the received signals. Therefore, it knows the resultant SINR $\tilde\gamma_1(\nu)$ at the itself under any given jamming power. In addition, under any given jamming power, the legitimate monitor can infer the resultant suspicious communication rate $r_0(\nu)$ in (\ref{eqn:r0}) (and accordingly the SINR $\gamma_0(\nu)$ at the legitimate monitor) by analyzing the received signals from the suspicious transmitter.

Under this setup, we propose an online cognitive jamming scheme by separating each time block into two phases: one for learning the required jamming power $\tilde q^*(\nu)$ at that time block, and the other for eavesdropping information. In the following, we first discuss how to learn $\tilde q^*(\nu)$ at the first phase, and then present the design of the thresholds and the corresponding jamming powers over time for the second phase.

\subsubsection{Learning the Required Jamming Power}

At the first phase of each time block, the legitimate monitor estimates the required jamming power $\tilde q^*(\nu) = \frac{g_0(\nu) \sigma^2_1 - g_1(\nu) \sigma^2_0}{g_1(\nu)g_2(\nu) - g_0(\nu)\phi(\nu)}$. At the first glance, this is a very difficult task as it does not know the channels $g_0(\nu)$ and $g_2(\nu)$ at that time block. Fortunately, under any given jamming power employed, the legitimate monitor is able to know the resultant SINRs $\tilde\gamma_1(\nu)$ at the legitimate monitor and $\gamma_0(\nu)$ at the suspicious receiver. As a result, the legitimate monitor knows whether the currently used jamming power is larger or smaller than $\tilde q^*(\nu)$. In this case, by adjusting the jamming power based on a bisection manner, the legitimate monitor is able to find $\tilde q^*(\nu)$ at that time block.

Note that in general, longer learning time results in more accurate estimation of $\tilde q^*(\nu)$ in the first phase, but reduces the length of the second phase for eavesdropping information. Therefore, there exists a tradeoff in designing the length of the two phases to optimize the eavesdropping performance, especially when the wireless channels fluctuate fast (e.g., due to the mobility of suspicious transmitter and receiver) and each time block is with a finite length. In this section, we consider that each time block is sufficiently long and thus the time consumed for estimation in the first phase is negligible.

\subsubsection{Online Threshold-Based Jamming Design}

After $\tilde q^*(\nu)$ is obtained, we propose a practical {\it online} threshold-based cognitive jamming design, inspired by the optimal cognitive jamming solution in Proposition \ref{proposition:P4}. In particular, at each time block $\nu$, the legitimate monitor updates a threshold $\tau(\nu)$ and accordingly obtains the online jamming power as $q_{\rm online}(\nu) = \tilde q^*(\nu)$ when the required jamming power $\tilde q^*(\nu)$ is no larger than the threshold $\tau(\nu)$, and $q_{\rm online}(\nu) = 0$ otherwise. Furthermore, at each time block $\nu$, if the average jamming power so far (i.e., $\frac{1}{\nu}\sum_{i=1}^{\nu} q_{\rm online}(\nu)$) is less than the maximum average power $Q$, we increase $\tau(\nu+1)$ as $\tau(\nu+1) = \tau(\nu) + \chi$ so as to jam over more blocks subsequently; otherwise, we decrease $\tau(\nu+1)$ as $\tau(\nu+1) = \tau(\nu) - \chi$. Here, $\chi > 0$ denotes a constant step size that is a design parameter. To summarize, we list the detailed algorithm in Table \ref{Table:I}.

\begin{table}[!t]\scriptsize
\caption{Algorithm for the Online Threshold-Based Jamming} \centering
\begin{tabular}{|p{8cm}|}
\hline\vspace{0.01cm}
\begin{itemize}
\item {\bf{Initialization}:} set the initial threshold as $\tau(1)$.
\item {\bf For $\nu=1,\ldots,N$}
    \begin{itemize}
    \item {\bf Jamming power design:} if $\tilde q^*(\nu) \le \tau(\nu)$, we have $q_{\rm online}(\nu) = \tilde q^*(\nu)$; otherwise, it follows that $q_{\rm online}(\nu) = 0$;
    \item {\bf Threshold update:}  if $\frac{1}{\nu}\sum_{i=1}^{\nu} q_{\rm online}(\nu)<Q$, we have $\tau(\nu+1) = \tau(\nu) + \chi$; otherwise, $\tau(\nu+1) = \tau(\nu) - \chi$.
    \end{itemize}
\item {\bf End for}
\end{itemize}\\
 \hline
\end{tabular}\label{Table:I}\vspace{0em}
\end{table}

It is worth noting that in the proposed online threshold-based cognitive jamming, the threshold $\tau(\nu)$'s will converge to the optimal threshold $1/\lambda^*_4$ if the step size $\chi$ is sufficiently small and the number of time blocks $N$ is sufficiently large. This is due to the fact that at each time block the value of $\frac{1}{\nu-1}\sum_{i=1}^{\nu-1} p_{\rm online}(\nu) - Q$ can be viewed as a good approximation of the subgradient of the dual problem of (P4), and therefore, the sequence of $1/\tau(\nu)$'s will converge to the optimal dual variable $\lambda^*_4$.

\begin{figure*}
\begin{minipage}[t]{0.325\linewidth}
\centering
\includegraphics[width=\textwidth]{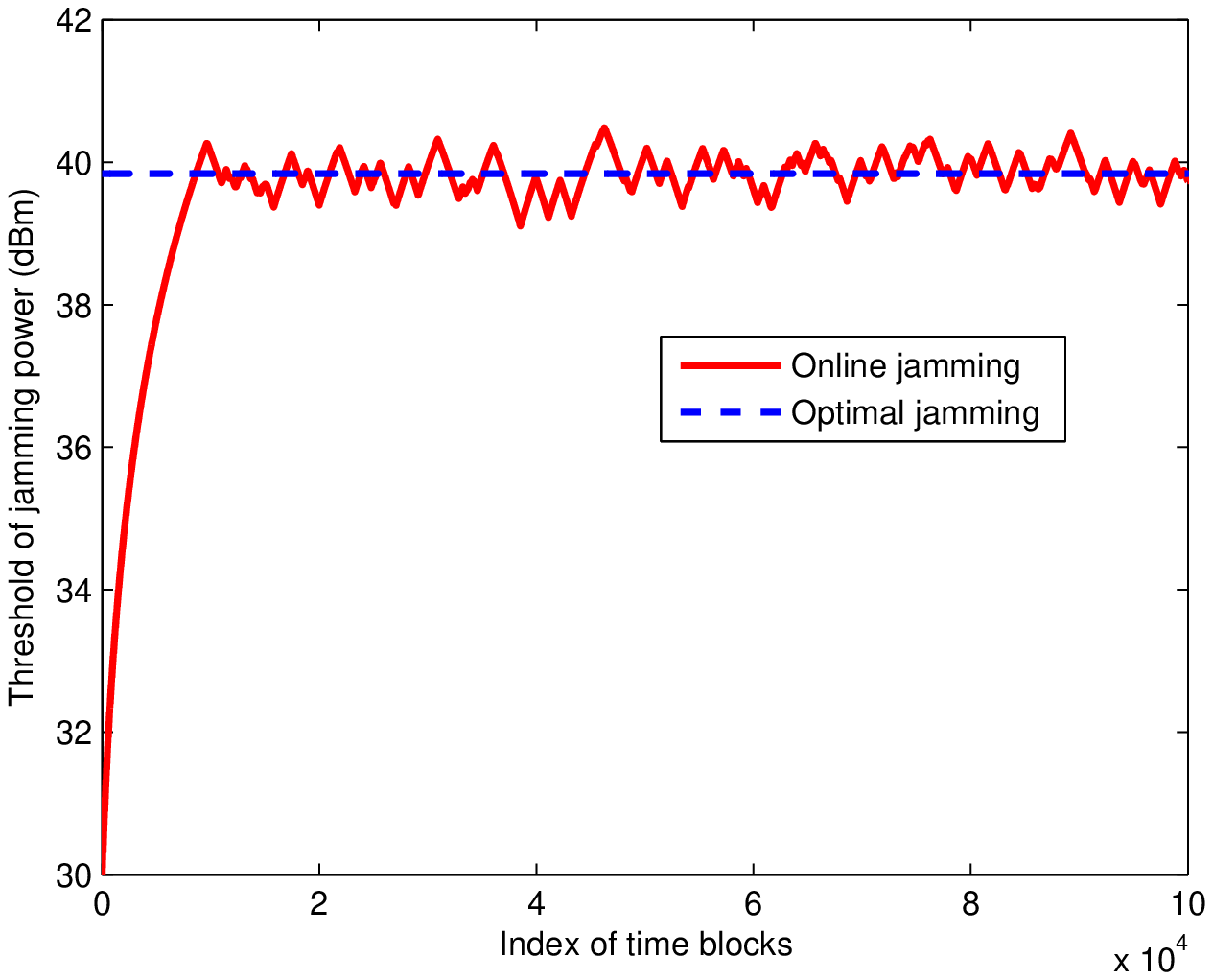}
\caption{The threshold comparison between the optimal and online cognitive jamming, where the maximum average jamming power is set as $Q = 30$ dBm.}\label{fig:9}
\end{minipage}
\hfill
\begin{minipage}[t]{0.325\linewidth}
\centering
\includegraphics[width=\textwidth]{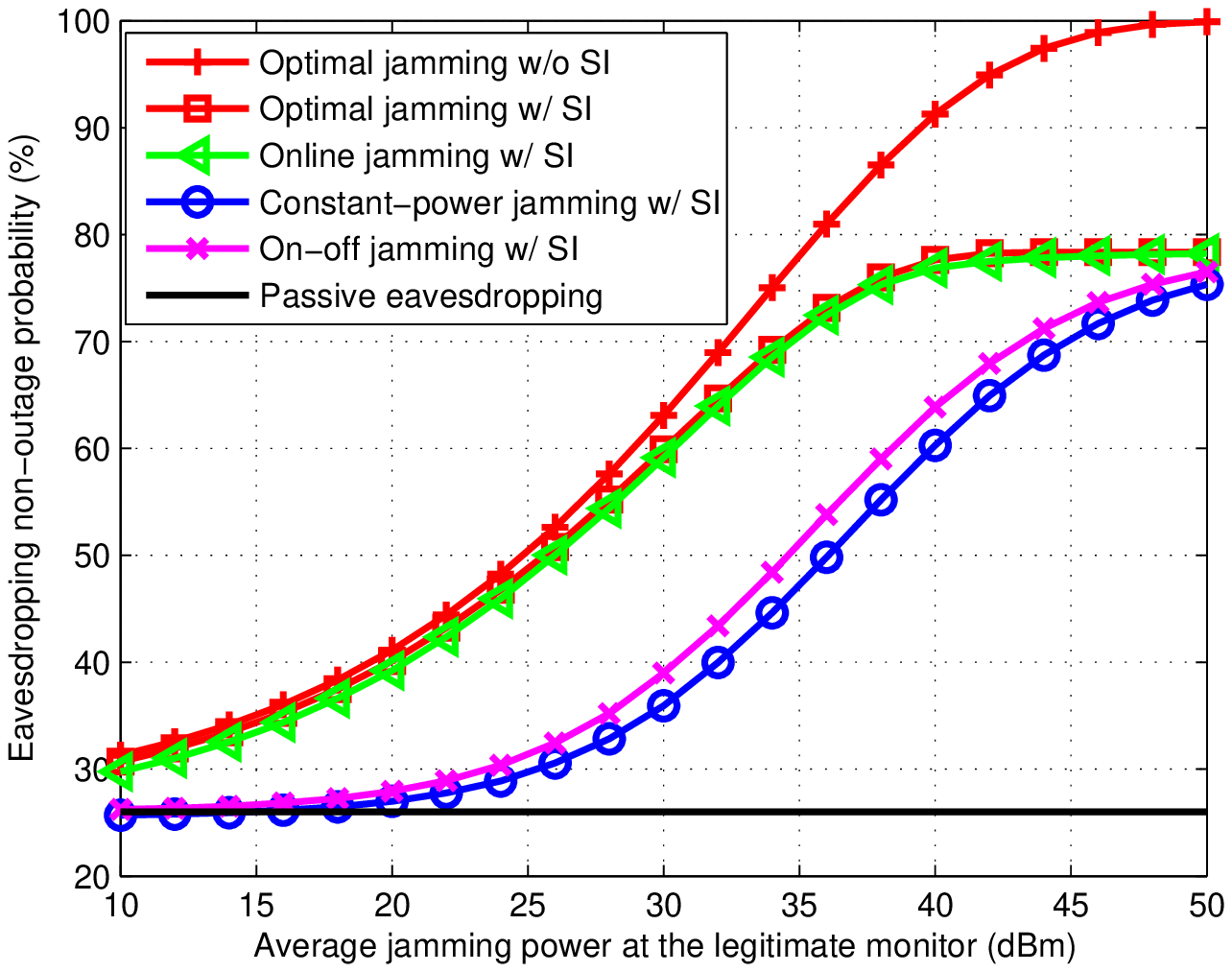}
\caption{Performance comparison between the optimal and online cognitive jamming in the case when the eavesdropping and jamming antennas are co-located at the legitimate monitor.}\label{fig:10}
\end{minipage}
\hfill
\begin{minipage}[t]{0.325\linewidth}
\centering
\includegraphics[width=\textwidth]{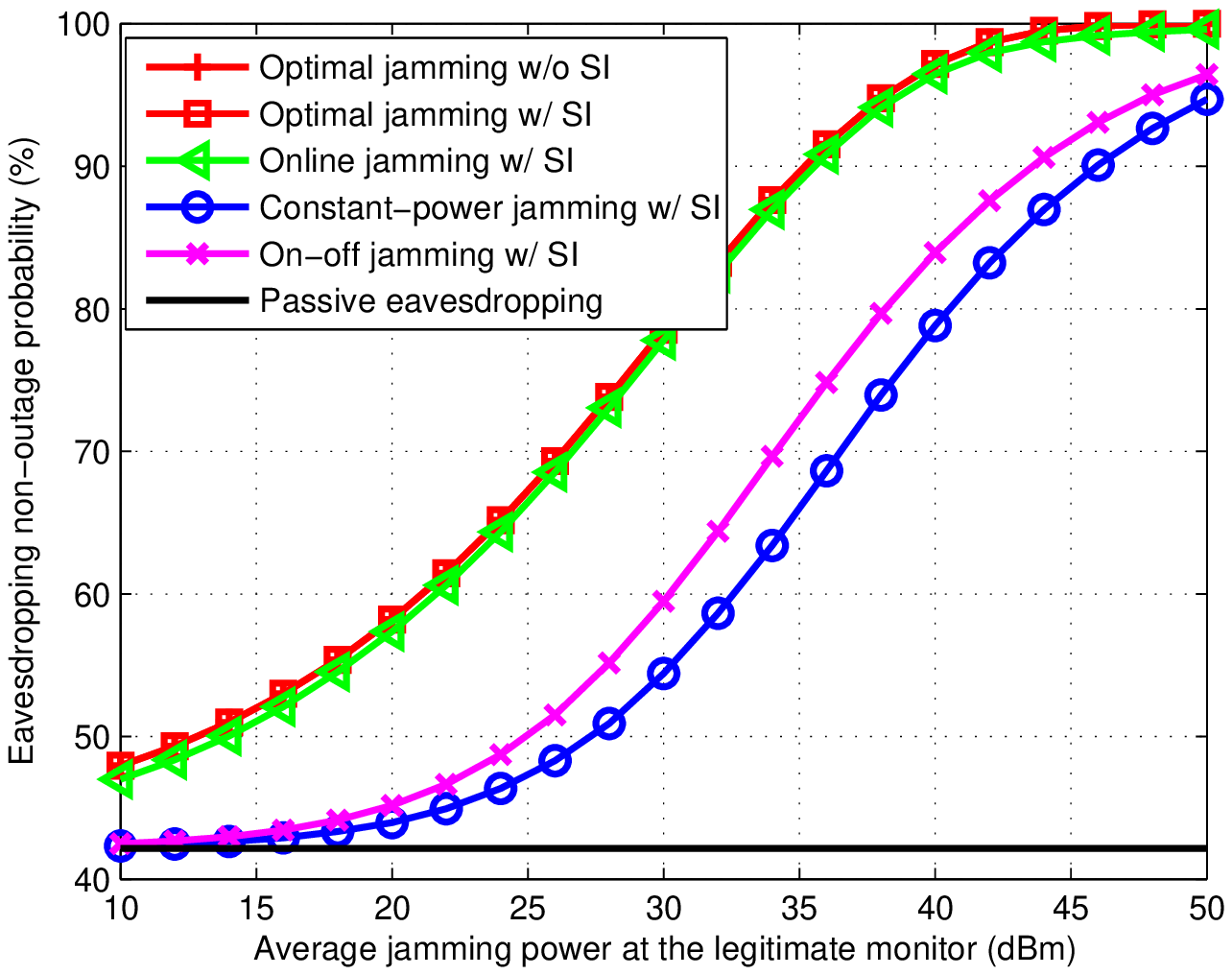}
\caption{Performance comparison between the optimal and online cognitive jamming in the case when the eavesdropping and jamming antennas are separately located at the legitimate monitor.} \label{fig:11}
\end{minipage}\vspace{-1em}
\end{figure*}

\subsection{Numerical Examples}

We conduct simulations to illustrate the effect of residual SI and show the performance of our proposed online cognitive jamming design under a practical setup with $N=10^5$ time blocks. In the simulations, we consider the suspicious transmitter and the suspicious receiver are located at (0, 0) and (500 meters, 0), respectively. We consider Rayleigh fading channel model, where the pathloss is assumed to be $\iota (d/d_0)^{-\kappa}$, with $\iota = -60$ dB at a reference distance of $d_0 = 10$ meters, and the pathloss exponent is $\kappa=3$. Here, $d$ denotes the distance between a transmitter and a receiver. Furthermore, we consider the SIC capability at the legitimate monitor to be $\varphi = 110$ dB \cite{fullduplex}. For the practical online cognitive jamming, we set the initial threshold as $\tau(1) = 2Q$, and the step size as $Q/1000$. In addition, we set the noise powers as $\sigma_0^2 = \sigma_1^2 = -80$ dBm, and the transmit power at the suspicious transmitter as $P=40$ dBm.

First, consider that the jamming and eavesdropping antennas of the legitimate monitor are co-located at (500 meters, 500 meters), where the loop-back channel power gain is assumed to be $\tilde \phi(\nu) = -15$ dB, $\forall \nu$ (with the distance between the eavesdropping and jamming antennas being a half wavelength) \cite{LoopBackChannel}. Fig. \ref{fig:9} shows the thresholds $1/\lambda_4^*$ by the optimal jamming and $\tau(\nu)$ by the practical online jamming, where $Q = 30$ dBm. It is observed that the threshold under practical online jamming converges to a similar value as the one under the optimal jamming, though it fluctuates over time due to the relatively large step size employed. Fig. \ref{fig:10} shows the eavesdropping non-outage probability versus the average jamming power $Q$. It is observed that our proposed online jamming (with SI) achieves close performances to the optimal jamming (with SI), which shows the effectiveness of our online threshold adaptation and jamming power design. The online jamming is also observed to significantly outperform other benchmark schemes including constant-power jamming, on-off jamming, and passive jamming. Furthermore, the performance achieved by the optimal jamming with SI is inferior to that without SI, especially when the average jamming power is larger than 30 dBm. Such a performance loss is due to the residual SI that is significant at the co-located legitimate monitor.

Next, consider that the eavesdropping and jamming antennas of the legitimate monitor are separately located at (250 meters, 500 meters) and (500 meters, 500 meters), respectively. Fig. \ref{fig:11} shows the eavesdropping non-outage probability versus the average jamming power $Q$. Due to the effectiveness of SIC in this case, the optimal jamming with SI is observed to perform the same as that without SI. Furthermore, the proposed online jamming with SI is observed to have a similar performance as that of optimal jamming, and achieves much higher eavesdropping non-outage probability than the other benchmark schemes.

%
%
%
%

\section{Concluding Remarks}\label{sec:VII}

This paper proposes a new proactive eavesdropping via cognitive jamming approach for a legitimate monitor to efficiently intercept a point-to-point suspicious communication link in fading channels. Under ideal assumptions of perfect SIC and global CSI and by considering both delay-sensitive and delay-tolerant suspicious applications, we formulate optimization problems to maximize the eavesdropping non-outage probability and the relative eavesdropping rate at the legitimate monitor, respectively, by optimizing its jamming power allocation subject to an average jamming power constraint. Despite the non-convexity of these problems, we obtain their optimal solutions by utilizing the Lagrange duality method. Numerical results show that our proposed proactive eavesdropping via cognitive jamming can significantly improve the eavesdropping performance as compared to conventional heuristics. Our proposed proactive eavesdropping design is also extended to the practical case with residual SI and local CSI only. We hope that this paper can provide a new paradigm for designing legitimate surveillance in emerging wireless communication networks. Due to the space limitation, there are various important issues that are unaddressed in this paper. We briefly discuss them in the following to motivate future studies.

First, in the future the suspicious users may be intelligent and be able to detect the legitimate monitor (see, e.g., \cite{MukherjeeICASSP2012}), deploy more antennas, and even utilize advanced physical-layer security techniques (aided by the artificial noise \cite{Goel2008}) to defend against the eavesdropping attack. These anti-eavesdropping techniques can be viewed as the countermeasure of the wireless information surveillance. Modeling and analyzing their interplay, e.g., via game theory \cite{HanZhu2011}, are interesting open problems.

Next, in practical wireless networks there may exist massive suspicious users each with more than one antennas, and they may adapt the transmit beamformers to defend against the eavesdropping. To ensure the successful eavesdropping in this case, we may need a large number of multi-antenna legitimate monitors with either separate or co-located eavesdropping/jamming antennas. How to select the mode (i.e., eavesdropping or jamming) for each antenna at different legitimate monitors, and coordinate the eavesdropping and jamming design at different antennas is an interesting problem worth pursuing in the future work.

Furthermore, to approach the proactive eavesdropping performance upper bound (beyond the online jamming), it is critical for the legitimate monitor to obtain the global CSI (especially the CSI of the suspicious link). Some channel learning ideas in cognitive radio and energy-based feedback (see, e.g., \cite{Zhang2010,NoamGoldsmith2013,XuZhang2014,GopalakrishnanSidiropoulos2014}) may be borrowed for the legitimate monitor to learn the CSI of the suspicious link.

\appendix
\subsection{Proof of Proposition \ref{proposition:P1}}\label{app:P1}
We prove Proposition \ref{proposition:P1} by using the Lagrange duality method. Let $\lambda \ge 0$ denote the dual variable associated with the average jamming power constraint in (\ref{eqn:P2:con1}). Then the partial Lagrangian of problem (P1) is expressed as
\begin{align}
{\mathcal L}_1(\{q(\nu)\},\lambda) = \mathbb{E}_{\nu}(X(\nu)) - \lambda\left(\mathbb{E}_{\nu}(q(\nu)) - Q\right).
\end{align}
Define the dual function as
\begin{align}
f_1(\lambda)=\max_{\{q(\nu)\ge 0\}}~\mathcal{L}_1(\{q(\nu)\},\lambda).\label{eqn:dual:function:P1}
\end{align}
Accordingly, the dual problem of (P1) is given by
\begin{align}
{\rm (D1)}: ~\min_{\lambda\ge 0} f_1(\lambda).
\end{align}

Since strong duality holds between (P1) and its dual problem (D1), we solve (P1) by equivalently solving (D1). In particular, we first solve problem (\ref{eqn:dual:function:P1}) to obtain $f_1(\lambda)$ under any given $\lambda$ and then solve problem (D1) to find the optimal $\lambda$, denoted by $\lambda_1^*$.

First, consider problem (\ref{eqn:dual:function:P1}) under any given $\lambda \ge 0$. By discarding the constant term $\lambda Q$, problem (\ref{eqn:dual:function:P1}) can be decomposed into a sequence of subproblems as follows each for one fading state $\nu$.
\begin{align}
\max_{q(\nu)\ge 0}X(\nu) - \lambda q(\nu)\label{problem:subproblem:dual:P1}
\end{align}
We solve problem (\ref{problem:subproblem:dual:P1}) by considering the two cases when $\left(\frac{g_0(\nu)}{g_1(\nu)}\sigma^2_1 - \sigma^2_0\right)\frac{1}{g_2(\nu)} \le 0$ and $\left(\frac{g_0(\nu)}{g_1(\nu)}\sigma^2_1 - \sigma^2_0\right)\frac{1}{g_2(\nu)} > 0$, respectively. When $\left(\frac{g_0(\nu)}{g_1(\nu)}\sigma^2_1 - \sigma^2_0\right)\frac{1}{g_2(\nu)} \le 0$, it always holds that $X(\nu) = 1$ provided that $q(\nu) \ge 0$, and thus problem (\ref{problem:subproblem:dual:P1}) becomes $\max_{q(\nu) \ge 0}~1 - \lambda q(\nu)$, for which the optimal solution is $q_1^{(\lambda)} = 0$.

On the other hand, when $\left(\frac{g_0(\nu)}{g_1(\nu)}\sigma^2_1 - \sigma^2_0\right)\frac{1}{g_2(\nu)} > 0$, problem (\ref{problem:subproblem:dual:P1}) is solved by comparing the optimal values under the following two subcases.

Subcase 1: $X(\nu) = 1$ or equivalently $q(\nu)\ge \left(\frac{g_0(\nu)}{g_1(\nu)}\sigma^2_1 - \sigma^2_0\right)\frac{1}{g_2(\nu)}$. In this case, problem (\ref{problem:subproblem:dual:P1}) becomes $\max_{q(\nu) \ge \left(\frac{g_0(\nu)}{g_1(\nu)}\sigma^2_1 - \sigma^2_0\right)\frac{1}{g_2(\nu)}}~1 - \lambda q(\nu)$, for which the solution is $q(\nu) = \left(\frac{g_0(\nu)}{g_1(\nu)}\sigma^2_1 - \sigma^2_0\right)\frac{1}{g_2(\nu)}$, and the resulting optimal value is $1 - \lambda \left(\frac{g_0(\nu)}{g_1(\nu)}\sigma^2_1 - \sigma^2_0\right)\frac{1}{g_2(\nu)}$.

Subcase 2: $X(\nu) = 0$ or equivalently $q(\nu)< \left(\frac{g_0(\nu)}{g_1(\nu)}\sigma^2_1 - \sigma^2_0\right)\frac{1}{g_2(\nu)}$. In this case, problem (\ref{problem:subproblem:dual:P1}) becomes $\max_{0 \le q(\nu) < \left(\frac{g_0(\nu)}{g_1(\nu)}\sigma^2_1 - \sigma^2_0\right)\frac{1}{g_2(\nu)}} - \lambda q(\nu)$, for which the solution is $q(\nu) = 0$, and the corresponding optimal value is $0$.

By comparing the two subcases, we have that if $1 - \lambda \left(\frac{g_0(\nu)}{g_1(\nu)}\sigma^2_1 - \sigma^2_0\right)\frac{1}{g_2(\nu)} > 0$, then $q_1^{(\lambda)} = \left(\frac{g_0(\nu)}{g_1(\nu)}\sigma^2_1 - \sigma^2_0\right)\frac{1}{g_2(\nu)}$; otherwise, $q_1^{(\lambda)} = 0$. By summarizing the above two cases, the optimal solution to problem (\ref{problem:subproblem:dual:P1}) is given as
\begin{align}
&q_1^{(\lambda)}(\nu) = \nonumber\\& \left\{
\begin{array}{ll}
\left(\frac{g_0(\nu)}{g_1(\nu)}\sigma^2_1 - \sigma^2_0\right)\frac{1}{g_2(\nu)}, & {\rm if}~0 < \left(\frac{g_0(\nu)}{g_1(\nu)}\sigma^2_1 - \sigma^2_0\right)\frac{1}{g_2(\nu)} < \frac{1}{\lambda}, \\
0, &{\rm otherwise}.
\end{array}
\right.\label{eqn:q1nu}
\end{align}
Therefore, the dual function $f_1(\lambda)$ has been obtained.

Next, we solve the dual problem (D1) to find the optimal $\lambda_1^*$ via the bisection method by using the fact that the subgradient of $f_1(\lambda)$ is indeed $s_1(\lambda) = Q-\mathbb{E}_{\nu}(q_1^{(\lambda)}(\nu))$ under any given $\lambda \ge 0$. By substituting the optimal $\lambda_1^*$ into (\ref{eqn:q1nu}), then the optimal solution to (P1) is given as $\{q_1^*(\nu)\}$ in (\ref{eqn:optimal:P1}). Note that if $\mathbb{E}_{\nu}\left(\left(\frac{g_0(\nu)}{g_1(\nu)}\sigma^2_1 - \sigma^2_0\right)\frac{1}{g_2(\nu)}\right) < Q$, then we have $s_1(\lambda) > 0, \forall \lambda \ge 0$. In this case, we have $\lambda^* \to 0$ and the optimal solution degrades to (\ref{eqn:optimal:P1:case1}). Otherwise, $\lambda_1^*$ is set such that $s_1(\lambda_1^*) = Q - \mathbb{E}_{\nu}(q_1^*(\nu)) = 0$. Therefore, the proposition is finally proved.
\vspace{-0em}
\subsection{Proof of Proposition \ref{proposition:P2:1}}\label{app:A}\vspace{-0em}
First, we prove the `if' part. Let $\{q(\nu)\}$ be a feasible solution set, then for any $\mu \ge 0$ and $\lambda \ge 0$, it follows that
$f_2(\mu,\lambda) \ge \mathcal{L}_2(\{q(\nu)\},\mu,\lambda) \ge 0.$
Then if there exist $\mu \ge 0$ and $\lambda \ge 0$ such that $f_2(\mu,\lambda) < 0$, then problem (P2.2) is infeasible and $\mu$ and $\lambda$ are one certificate of infeasibility.

Next, we prove the `only if' part. Consider a given $\lambda > 0$, and define the following problem.
\begin{align}
\max_{\{q(\nu)\}}~&~\lambda\left(Q - \mathbb{E}_{\nu}(q(\nu))\right)  \label{eqn:aux:problem}\\
{\rm s.t.}~&~(\ref{eqn:P21:con1})~{\mathrm{and}}~(\ref{eqn:P2:con2})\nonumber
\end{align}
Note that problem (\ref{eqn:aux:problem}) is always feasible for any $0\le t\le 1$, via the legitimate monitor setting its jamming power to be sufficiently large, e.g., $q(\nu) \to \infty, \forall \nu$. Let the optimal solution to problem (\ref{eqn:aux:problem}) be denoted by $\{\underline q(\nu)\}$. Since problem (P2.2) is infeasible, it follows that $\mathbb{E}_{\nu}(\underline q(\nu)) > Q$ and thus $\lambda\left(Q- \mathbb{E}_{\nu}(\underline q(\nu))\right) < 0$. Furthermore, note that strong duality holds for problem (\ref{eqn:aux:problem}) since it satisfies the time-sharing condition. By letting $\mu$ denote the dual variable associated with the constraint (\ref{eqn:P21:con1}) in problem (\ref{eqn:aux:problem}), then it is easy to show that there exists a dual variable $\underline{\mu} \ge 0$ such that
$\max_{\{q(\nu)\ge 0\}}~\mathcal{L}_2(\{q(\nu)\},\underline{\mu},\lambda) = \mathcal{L}_2(\{\underline q(\nu)\},\underline{\mu},\lambda) < 0.$
As a consequence, we have $f_2(\underline{\mu},\lambda) = \mathcal{L}_2(\{\underline q(\nu)\},\underline{\mu},\lambda)  < 0$. Equivalently, there exist $\mu \ge 0$ and $\lambda \ge 0$ such that $f_2(\mu,\lambda) < 0$. Therefore, this proposition follows immediately.
\vspace{-0em}
\subsection{Proof of Proposition \ref{proposition:P2:2}}\label{app:B}\vspace{-0em}
First, we consider the case when $\left(\frac{g_0(\nu)}{g_1(\nu)}\sigma^2_1 - \sigma^2_0\right)\frac{1}{g_2(\nu)} \le 0$. In this case, it always holds that $X(\nu) = 1$ provided that $q(\nu) \ge 0$. As a result, problem (\ref{eqn:P2:dual:function:decom}) becomes
\begin{align*}
\max_{q(\nu)\ge 0}~\mu(1-t)r_0(\nu) - \lambda q(\nu),
\end{align*}
for which the optimal solution is $q_2^{(\mu,\lambda)}(\nu) = 0$.

Next, we consider the other case when $\left(\frac{g_0(\nu)}{g_1(\nu)}\sigma^2_1 - \sigma^2_0\right)\frac{1}{g_2(\nu)} > 0$. In this case, problem (\ref{eqn:P2:dual:function:decom}) is solved by comparing the optimal values under the two subcases when $X(\nu) = 1$ and $X(\nu) = 0$, respectively.

Subcase 1: $X(\nu) = 1$ or equivalently $q(\nu)\ge \left(\frac{g_0(\nu)}{g_1(\nu)}\sigma^2_1 - \sigma^2_0\right)\frac{1}{g_2(\nu)}$. In this subcase, problem (\ref{eqn:P2:dual:function:decom}) becomes $\max_{q(\nu)\ge \left(\frac{g_0(\nu)}{g_1(\nu)}\sigma^2_1 - \sigma^2_0\right)\frac{1}{g_2(\nu)}}\mu\left(1- t\right) r_0(\nu) - \lambda q(\nu)$, for which the solution is $q(\nu) = \left(\frac{g_0(\nu)}{g_1(\nu)}\sigma^2_1 - \sigma^2_0\right)\frac{1}{g_2(\nu)}$, and the resultant optimal value is given as $v_1(\nu)$ in (\ref{eqn:v1}).

Subcase 2: $X(\nu) = 0$ or equivalently $q(\nu)< \left(\frac{g_0(\nu)}{g_1(\nu)}\sigma^2_1 - \sigma^2_0\right)\frac{1}{g_2(\nu)}$. In this subcase, problem (\ref{eqn:P2:dual:function:decom}) becomes \begin{align}
      \max_{q(\nu)}&~ - \mu t r_0(\nu) - \lambda q(\nu)\nonumber\\
      \mathrm{s.t.}&~0 \le q(\nu) < \left(\frac{g_0(\nu)}{g_1(\nu)}\sigma^2_1 - \sigma^2_0\right)\frac{1}{g_2(\nu)}.\label{eqn:case2}
      \end{align}
      Note that $r_0(\nu)$ is a convex function in $q(\nu) \ge 0$. As a result, problem (\ref{eqn:case2}) is a convex optimization problem. By using the standard convex optimization technique, the optimal solution to problem (\ref{eqn:case2}) is given as $q(\nu) = \bar q(\nu)$ in (\ref{eqn:hatq}) and the resulting optimal value is expressed as $v_2(\nu)$ in (\ref{eqn:v2}). Note that in the case with $\bar q(\nu) = \left(\frac{g_0(\nu)}{g_1(\nu)}\sigma^2_1 - \sigma^2_0\right)\frac{1}{g_2(\nu)}$, the solution $q(\nu) = \bar q(\nu)$ here cannot be exactly achieved due to the strict power inequality constraint in problem (\ref{eqn:case2}). Nevertheless, this would not affect the solution to (\ref{eqn:P2:dual:function:decom}), since in this case the optimal value $v_2(\nu)$ here is always smaller than that in the subcase 1, i.e., $v_1(\nu)$.

By comparing the optimal values $v_1(\nu)$ and $v_2(\nu)$, the optimal solution $q_2^{(\mu,\lambda)}(\nu)$ in the case when $\left(\frac{g_0(\nu)}{g_1(\nu)}\sigma^2_1 - \sigma^2_0\right)\frac{1}{g_2(\nu)} > 0$ can be obtained. By using this together with the solution $q_2^{(\mu,\lambda)}(\nu) = 0$ in the case with $\left(\frac{g_0(\nu)}{g_1(\nu)}\sigma^2_1 - \sigma^2_0\right)\frac{1}{g_2(\nu)} \le 0$, the optimal solution to problem (\ref{eqn:P2:dual:function:decom}) is finally given in (\ref{eqn:dual:function:P2:solution}). Therefore, this proposition is proved.

\subsection{Checking the Feasibility of Problem (\ref{eqn:29:feasibility})}\label{app:D}
Let the dual variables associated with the constraints in (\ref{eqn:P2:con1}) and (\ref{eqn:Pave}) be denoted by $\lambda \ge 0$ and $\zeta$, respectively. Then the partial Lagrangian of problem (\ref{eqn:29:feasibility}) is denoted as
\begin{align}
&\mathcal{\hat{L}}(\{q(\nu)\},\lambda,\zeta) = - \lambda \left(\mathbb{E}_{\nu}(q(\nu)) - Q\right) - \zeta\left(\mathbb{E}_\nu\left(\hat p(\nu)\right) - P\right).
\end{align}
As a result, the dual function of (P2.2) is expressed as
\begin{align}
\hat f(\lambda,\zeta)=\max_{\{q(\nu)\ge 0\}}~\mathcal{\hat{L}}(\{q(\nu)\},\lambda,\zeta).\label{eqn:feasibility}
\end{align}
The dual problem is accordingly written as $\min_{\lambda\ge 0,\zeta}~\hat f(\lambda,\zeta)$.

Based on Lemma \ref{lemma:2}, we check the feasibility of problem (\ref{eqn:29:feasibility}) by solving its dual problem. Similar to Proposition \ref{proposition:P2:1}, we have the following proposition, for which the proof is omitted for brevity.
\begin{proposition}\label{proposition:3}
Problem (\ref{eqn:29:feasibility}) is infeasible if and only if there exist $\lambda \ge 0$ and $\zeta$ such that $\hat f(\lambda,\zeta) < 0$.
\end{proposition}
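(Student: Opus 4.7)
My plan mirrors the two-direction argument given for Proposition \ref{proposition:P2:1} in Appendix \ref{app:A}, adjusted to accommodate the equality constraint (\ref{eqn:Pave}) whose multiplier $\zeta$ is sign-free.

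For the ``if'' direction, I would rely on weak duality. For any feasible $\{q(\nu)\}$ of (\ref{eqn:29:feasibility}), we have $\mathbb{E}_\nu(q(\nu)) - Q \le 0$ and $\mathbb{E}_\nu(\hat p(\nu)) - P = 0$. Hence for every $\lambda \ge 0$ and every $\zeta \in \mathbb{R}$, it follows that $\mathcal{\hat L}(\{q(\nu)\}, \lambda, \zeta) \ge 0$, so $\hat f(\lambda, \zeta) \ge 0$. The contrapositive yields that the existence of $(\lambda, \zeta)$ with $\hat f(\lambda, \zeta) < 0$ forces infeasibility.

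For the ``only if'' direction, suppose (\ref{eqn:29:feasibility}) is infeasible under the given $\beta$. I would construct the certificate by applying strong duality to an auxiliary problem. Fix any $\lambda > 0$ and consider
\begin{align*}
\max_{\{q(\nu) \ge 0\}}~&~\lambda \bigl( Q - \mathbb{E}_\nu(q(\nu)) \bigr) \\
\mathrm{s.t.}~&~\mathbb{E}_\nu(\hat p(\nu)) = P.
\end{align*}
This auxiliary is feasible for every $\beta \in [0, \beta^{\max}]$: at $\beta = \beta^{\max}$ one takes $q(\nu) = 0$, while for smaller $\beta$ one can raise $\{q(\nu)\}$ on a sufficient measure of states to bring $\mathbb{E}_\nu(\hat p(\nu))$ back down to $P$ by continuity and monotonicity of the water-filling expression (\ref{eqn:solution:p}) in $q(\nu)$. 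Let $\{\underline q(\nu)\}$ be an optimizer; infeasibility of (\ref{eqn:29:feasibility}) forces $\mathbb{E}_\nu(\underline q(\nu)) > Q$, so the auxiliary's optimal value is strictly negative.

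By the same time-sharing argument used in Lemma \ref{lemma:2} (viewing the equality $\mathbb{E}_\nu(\hat p(\nu)) = P$ as two inequalities with nonnegative multipliers $\zeta_1, \zeta_2$ and setting $\zeta = \zeta_1 - \zeta_2$), strong duality holds for the auxiliary, so its dual optimum equals the primal optimum. That dual optimum is precisely $\min_{\zeta} \hat f(\lambda, \zeta)$, and so some $\underline{\zeta} \in \mathbb{R}$ must satisfy $\hat f(\lambda, \underline{\zeta}) < 0$, delivering the desired certificate. The main obstacle I anticipate is justifying feasibility of the auxiliary for every admissible $\beta$ and confirming that the time-sharing/strong-duality step is not obstructed by the piecewise, non-concave dependence of $\hat p(\nu)$ on $q(\nu)$; both are settled by the argument of Lemma \ref{lemma:2}, which is already shown to apply to this equality-constrained formulation.
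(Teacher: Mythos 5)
Your proof is correct and follows essentially the same route as the paper: the paper omits the proof of Proposition \ref{proposition:3} "for brevity," stating only that it is similar to Proposition \ref{proposition:P2:1}, whose proof in Appendix \ref{app:A} is exactly your two-direction argument (weak duality for the "if" part, and for the "only if" part an auxiliary problem maximizing $\lambda(Q-\mathbb{E}_\nu(q(\nu)))$ whose strong duality, via the time-sharing property of Lemma \ref{lemma:2} with the equality constraint split into two inequalities, yields the certificate $\underline{\zeta}$). Your added remark on why the auxiliary problem remains feasible for every $\beta\in[0,\beta^{\max}]$ is a reasonable refinement of the paper's "take $q(\nu)$ sufficiently large" justification, not a deviation from it.
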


In addition, we have the optimal solution to problem (\ref{eqn:feasibility}) given in the following proposition.
\begin{proposition}\label{proposition:4}
The optimal solution to problem (\ref{eqn:feasibility}) is given as
\begin{align}\label{eqn:opt:29:feasibility}
\underline{q}^{(\lambda,\zeta)}(\nu) = \left\{
\begin{array}{ll}
\hat q_1(\nu),& ~{\rm if}~ -  \frac{\zeta}{\ln 2 \cdot \beta}+ \frac{\zeta\sigma^2_0}{g_0(\nu)} \le  - \lambda \hat q_1(\nu)\\
0,& ~{\rm otherwise},
\end{array}
\right.
\end{align}
where $\hat q_1(\nu) \triangleq \left[\frac{g_0(\nu)}{\ln2\cdot \beta g_2(\nu)} - \frac{\sigma_0^2}{g_2(\nu)}\right]^+, \forall \nu$.
\end{proposition}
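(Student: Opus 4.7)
\subsection*{Proof plan for Proposition \ref{proposition:4}}

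By Lemma \ref{lemma:2}, strong duality holds for problem (\ref{eqn:29:feasibility}), so I work with the dual function (\ref{eqn:feasibility}). Since $\mathcal{\hat L}$ is separable across fading states, I first drop the constants $\lambda Q+\zeta P$ and decompose (\ref{eqn:feasibility}) into the per-state subproblem
\begin{equation*}
\max_{q(\nu)\ge 0}~ -\lambda q(\nu) - \zeta\,\hat p(\nu),
\end{equation*}
with $\hat p(\nu)=\bigl[\tfrac{1}{\ln 2\cdot\beta}-\tfrac{g_2(\nu)q(\nu)+\sigma_0^2}{g_0(\nu)}\bigr]^+$ from (\ref{eqn:solution:p}). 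The plan is to solve this scalar problem exactly by exploiting that $\hat p(\nu)$ is piecewise linear in $q(\nu)$.

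The key structural observation is that $\hat p(\nu)$ equals $\tfrac{1}{\ln 2\cdot\beta}-\tfrac{g_2(\nu)q(\nu)+\sigma_0^2}{g_0(\nu)}$ on the interval $[0,\hat q_1(\nu)]$ and is identically zero on $[\hat q_1(\nu),\infty)$, where $\hat q_1(\nu)$ defined in (\ref{eqn:hatq1}) is exactly the kink point (the jamming level at which water-filling shuts the suspicious transmitter off). Consequently the per-state objective is piecewise linear in $q(\nu)$ with two affine pieces and a single breakpoint at $\hat q_1(\nu)$. On $[\hat q_1(\nu),\infty)$ the slope is $-\lambda\le 0$, so this piece is maximized at its left endpoint $q(\nu)=\hat q_1(\nu)$; on $[0,\hat q_1(\nu)]$ the objective is affine, so its maximum is attained at one of the endpoints $q(\nu)=0$ or $q(\nu)=\hat q_1(\nu)$. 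Therefore the global optimum of the subproblem lies in $\{0,\hat q_1(\nu)\}$.

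It then remains to compare the two candidate values. At $q(\nu)=\hat q_1(\nu)$ we have $\hat p(\nu)=0$, giving objective $-\lambda\hat q_1(\nu)$. At $q(\nu)=0$ the objective equals $-\zeta\bigl(\tfrac{1}{\ln 2\cdot\beta}-\tfrac{\sigma_0^2}{g_0(\nu)}\bigr)$ whenever $\hat q_1(\nu)>0$ (so the bracket is nonnegative). Selecting the larger yields the stated condition
\begin{equation*}
-\tfrac{\zeta}{\ln 2\cdot\beta}+\tfrac{\zeta\sigma_0^2}{g_0(\nu)} \le -\lambda\hat q_1(\nu) \ \Longleftrightarrow\ \underline q^{(\lambda,\zeta)}(\nu)=\hat q_1(\nu),
\end{equation*}
and otherwise $\underline q^{(\lambda,\zeta)}(\nu)=0$. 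The boundary case $\hat q_1(\nu)=0$ is trivial because both candidates coincide and both sides of the inequality reduce accordingly.

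The only subtlety I anticipate is verifying boundedness from above over $q(\nu)\ge 0$: when $\lambda>0$ the term $-\lambda q(\nu)$ drives the objective to $-\infty$ as $q(\nu)\to\infty$, while when $\lambda=0$ the objective is identically $0$ on $[\hat q_1(\nu),\infty)$, so the supremum is still attained at $\hat q_1(\nu)$ (and the proposition's selection rule remains a valid maximizer, possibly non-unique). This boundedness check is the main (mild) obstacle; everything else reduces to a one-dimensional piecewise-linear comparison, parallel in spirit to the endpoint analysis used in Proposition \ref{proposition:P1} and Proposition \ref{proposition:P2:2}.
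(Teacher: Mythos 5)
Your proposal is correct and follows essentially the same route as the paper's proof in Appendix D: decompose the dual function per fading state, observe that the objective is piecewise linear with the single breakpoint $\hat q_1(\nu)$ where water-filling shuts the transmitter off, and compare the two candidate values at $q(\nu)=0$ and $q(\nu)=\hat q_1(\nu)$. Your added remarks on boundedness when $\lambda=0$ and on the degenerate case $\hat q_1(\nu)=0$ are mild refinements the paper leaves implicit, not a different argument.
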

\begin{IEEEproof}
Problem (\ref{eqn:feasibility}) can be decomposed into various subproblems as follows each for one fading state $\nu$.
\begin{align}
\max_{q(\nu)\ge 0}~- \lambda q(\nu) - \zeta\hat p(\nu)\label{eqn:feasibility:decomposed}
\end{align}
When $q(\nu) \ge \hat q_1(\nu)$, problem (\ref{eqn:feasibility:decomposed}) becomes $\max_{q(\nu)\ge \hat q_1(\nu)}~- \lambda q(\nu)$, for which the optimal solution is $q(\nu) = \hat q_1(\nu)$ and the resulting optimal value is $- \lambda \hat q_1(\nu)$. On the other hand, when $q(\nu) < \hat q_1(\nu)$, problem (\ref{eqn:feasibility:decomposed}) becomes $\max_{0 \le q(\nu)<\hat q_1(\nu)}~ - \lambda q(\nu) - \zeta \left(\frac{1}{\ln 2 \cdot \beta} - \frac{g_2(\nu)q(\nu)+\sigma^2_0}{g_0(\nu)} \right)$. The objective values under $q(\nu) = 0$ and $q(\nu) = \hat q_1(\nu)$ are given as $- \zeta \frac{1}{\ln 2 \cdot \beta} + \frac{\zeta\sigma^2_0}{g_0(\nu)}$ and $- \lambda \hat q_1(\nu)$, respectively. As a result, by comparing them, we have the optimal solution to problem (\ref{eqn:feasibility:decomposed}) as given in (\ref{eqn:opt:29:feasibility}). Therefore, Proposition \ref{proposition:4} is verified.
\end{IEEEproof}

Based on Propositions \ref{proposition:3} and \ref{proposition:4}, we can efficiently check the feasibility of problem (\ref{eqn:29:feasibility}) by using the ellipsoid method by using the fact that the subgradient of $\hat f(\lambda,\zeta)$ is given by $\hat{\mv s}(\lambda,\zeta) =   \left[Q - \mathbb{E}_{\nu}(\underline{q}^{(\lambda,\zeta)}(\nu)),~P-\mathbb{E}_\nu\left(\underline{\hat p}^{(\lambda,\zeta)}(\nu)\right)\right]^T$, where $\{\underline{\hat p}^{(\lambda,\zeta)}(\nu)\}$ denotes the corresponding $\{\hat p(\nu)\}$ in (\ref{eqn:solution:p}) under given $\{\underline{q}^{(\lambda,\zeta)}(\nu)\}$.

\subsection{Proof of Proposition \ref{proposition:5}}\label{app:E}

Note that by discarding the constant $\lambda Q + \zeta P$, problem (\ref{eqn:P2:2:dual:function}) can be equivalent decomposed into various subproblems in the following, each of which is for one fading state $\nu$.
\begin{align}
\max_{q(\nu)\ge 0}~ &\mu(X(\nu)-t)\hat r_0(\nu)  - \lambda q(\nu) - \zeta\hat p(\nu)\label{eqn:P2:2:dual:function:state}
\end{align}
For each fading state $\nu$, problem (\ref{eqn:P2:2:dual:function:state}) is solved by considering three cases.

Consider the first case when $q(\nu) \ge \hat q_1(\nu)$, in which we have $\hat r_0(\nu) = 0$ and $\hat p(\nu) = 0$. Accordingly, problem (\ref{eqn:P2:2:dual:function:state}) becomes $\max_{q(\nu)\ge \hat q_1(\nu)}~ - \lambda q(\nu)$, for which the optimal solution and the resulting optimal value are $q(\nu) = \hat q_1(\nu)$ in (\ref{eqn:hatq1}) and $\hat v_1(\nu)$ in (\ref{eqn:hatv1}), respectively.

Next, consider the second case with $q(\nu) < \hat q_1(\nu)$ and $X(\nu) = 1$ (or equivalently $q(\nu)\ge \hat q_2(\nu)$ with $\hat q_2(\nu)$ given in (\ref{eqn:hatq2})). In this case, problem (\ref{eqn:P2:2:dual:function:state}) becomes
\begin{align}
\max_{q(\nu)}~ &\mu(1-t)\log_2\left( \frac{g_0(\nu)}{\ln 2 \cdot \beta(g_2(\nu)q(\nu) + \sigma^2_0)} \right)  - \lambda q(\nu) \nonumber\\& - \zeta \left(\frac{1}{\ln 2 \cdot \beta} - \frac{g_2(\nu)q(\nu)+\sigma^2_0}{g_0(\nu)} \right)\nonumber\\
\mathrm{s.t.}~& \hat q_2(\nu)\le q(\nu)<\hat q_1(\nu).\label{eqn:problem:subcase2}
\end{align}
Note that problem (\ref{eqn:problem:subcase2}) is feasible only when $\hat q_2(\nu)<\hat q_1(\nu)$. Furthermore, the objective function of problem (\ref{eqn:problem:subcase2}) is convex as a function of $q(\nu)$. As a result, its optimal solution is either $q(\nu) = \hat q_2(\nu)$ or $q(\nu) = \hat q_1(\nu)$. When $q(\nu) = \hat q_1(\nu)$, the objective value is $\hat v_1(\nu)$ in (\ref{eqn:hatv1}), while when $q(\nu) = \hat q_2(\nu)$, the objective value is $\hat v_2(\nu)$ in (\ref{eqn:hatv2}).

In addition, consider the third case with $q(\nu) < \hat q_1(\nu)$ and $X(\nu) = 0$ (or equivalently $q(\nu) < \hat q_2(\nu)$. In this case, problem (\ref{eqn:P2:2:dual:function:state}) becomes
\begin{align}
\max_{q(\nu)}~ &-t\mu\log_2\left( \frac{g_0(\nu)}{\ln 2 \cdot \beta(g_2(\nu)q(\nu) + \sigma^2_0)} \right)  - \lambda q(\nu) \nonumber\\&- \zeta \left(\frac{1}{\ln 2 \cdot \beta} - \frac{g_2(\nu)q(\nu)+\sigma^2_0}{g_0(\nu)} \right)\nonumber\\
\mathrm{s.t.}~& 0\le q(\nu) < \min(\hat q_1(\nu),\hat q_2(\nu)),\label{eqn:pro:40}
\end{align}
which is a convex optimization problem. It can be shown that the first-order derivative of the objective function of problem (\ref{eqn:pro:40}) achieves zero value when $\hat q_4(\nu) $ in (\ref{eqn:q4}). As a result, the optimal solution to problem (\ref{eqn:pro:40}) is given as $\hat q_3(\nu) = \left[\min(\hat q_1(\nu),\hat q_2(\nu),\hat q_4(\nu))\right]^+$ as given in (\ref{eqn:hatq3})), and the resulting optimal value is given as $\hat v_3(\nu)$ in (\ref{eqn:hatv3}).

By comparing the obtained values $\hat v_1(\nu)$, $\hat v_2(\nu)$, and $\hat v_3(\nu)$ in the above three cases, together with the fact that $\hat v_2(\nu)$ is achievable (i.e., problem (\ref{eqn:problem:subcase2}) is feasible) only when $\hat q_2(\nu)<\hat q_1(\nu)$, we can obtain the optimal solution to problem (\ref{eqn:P2:2:dual:function:state}). Therefore, this proposition is proved.

\vspace{-0em}


\begin{thebibliography}{1}
\vspace{-0em}

\bibitem{ZouWangHanzo2015}
Y. Zou, J. Zhu, X. Wang, and L. Hanzo, ``A survey on wireless security: technical challenges, recent advances and future trends,'' {\it Proc. IEEE}, vol. 104, no. 9, pp. 1727-1765, Sep. 2016.

\bibitem{Wyner1975}
A. D. Wyner, ``The wire-tap channel,'' {\it Bell System Technical Journal}, vol. 54, no. 8, pp. 1355-1387, 1975.

\bibitem{GopalaLaiGamal2008}
P. K. Gopala, L. Lai, and H. E. Gamal, ``On the secrecy capacity of fading channels,'' {\it IEEE Trans. Inf. Theory}, vol. 54, no. 10, pp. 4687-4698, Oct. 2008.

\bibitem{BlochBarrosRodriguesMcLaughlin2008}
M. Bloch, J. Barros, M. R. D. Rodrigues, and S. McLaughlin, ``Wireless information-theoretic security,'' {\it IEEE Trans. Inf. Theory}, vol. 54, no. 6, pp. 2515-2534, Jun. 2008.

\bibitem{LiangPoorShamai2008}
Y. Liang, H. Poor, and S. Shamai, ``Secure communication over fading channels,'' {\it IEEE Trans. Inf. Theory}, vol. 54, no. 6, pp. 2470-2492, Jun. 2008.

\bibitem{UAV1}
Y. Zeng, R. Zhang, and T. J. Lim, ``Wireless communications with unmanned aerial vehicles: opportunities and challenges,'' {\it IEEE Commun. Mag.}, vol. 54, no. 5, pp. 36-42, May 2016.

\bibitem{UAV2}
Y. Zeng, R. Zhang, and T. J. Lim, ``Throughput Maximization for UAV-Enabled Mobile Relaying Systems,'' {\it IEEE Trans. Commun.}, vol. 64, no. 12, pp. 4983-4996, Dec. 2016.

\bibitem{UAV3}
J. Lyu, Y. Zeng, and R. Zhang, ``Cyclical multiple access in UAV-aided communications: a throughput-delay tradeoff,'' {\it IEEE Wireless Commun. Letters}, vol. 5, no. 6, pp. 600-603, Dec. 2016.

\bibitem{TerroristWiKi}
Terrorist surveillance program. [Online] Available: {\url{https://en.wikipedia.org/wiki/Terrorist_Surveillance_Program}}

\bibitem{WireComMag}
J. Xu, L. Duan, and R. Zhang, ``Surveillance and intervention of infrastructure-free mobile communications: a new wireless security paradigm,'' to appear in {\it IEEE Wireless Commun.}.

\bibitem{WCL}
J. Xu, L. Duan, and R. Zhang, ``Proactive eavesdropping via jamming for rate maximization over Rayleigh fading channels,'' {\it IEEE Wireless Commun. Letters}, vol. 5, no. 1, pp. 80-83, Feb. 2016.

\bibitem{JSTSP}
Y. Zeng and R. Zhang, ``Wireless information surveillance via proactive eavesdropping with spoofing relay,'' {\it IEEE IEEE J. Sel. Topics Signal Process.}, vol. 10, no. 8, pp. 1449-1461, Dec. 2016.

\bibitem{Spoofing1}
J. Xu, L. Duan, and R. Zhang, ``Fundamental rate limits of physical layer spoofing,'' to appear in {\it IEEE Wireless Commun. Letters}.

\bibitem{Spoofing2}
J. Xu, L. Duan, and R. Zhang, ``Transmit optimization for symbol-level spoofing with BPSK signaling,'' in {\it Proc. IEEE GLOBECOM Workshop}, 2016.


\bibitem{fullduplex}
D. Bharadia, E. McMilin, and S. Katti, ``Full duplex radios,'' in {\it Proc. ACM SIGCOMM 2013}, Hong Kong, China, Aug. 2013.

\bibitem{LiuLiKongZhao2015}
Q. Liu, M. Li, X. Kong, and N. Zhao, ``Disrupting MIMO communications with optimal jamming signal design,'' {\it IEEE Trans. Wireless Commun.}, vol. 14, no. 10, pp. 5313-5325, Oct. 2015.


\bibitem{KashyapBasarSrikant2004}
A. Kashyap, T. Basar, and R. Srikant, ``Correlated jamming on MIMO Gaussian fading channels,'' {\it IEEE Trans. Inf. Theory}, vol. 50, no. 9, pp. 2119-2123, Sep. 2004.

\bibitem{MukherjeeSwindlehurst2011}
A. Mukherjee and A. L. Swindlehurst, ``A full-duplex active eavesdropper in MIMO wiretap channels: construction and countermeasures,'' in {\it Proc. IEEE ASILOMAR}, pp. 265-269, Nov. 2011.
%
\bibitem{ZhouMahamHjorungnes2011}
X. Zhou, B. Maham, and A. Hj{\o}rungnes, ``Pilot contamination for active eavesdropping,'' {\it IEEE Trans. Wireless Commun.}, vol. 11, no. 3, pp. 903-907, Mar. 2012.
%
\bibitem{KapetanovicZhengRusek2015}
D. Kapetanovic, G. Zheng, and F. Rusek, ``Physical layer security for massive MIMO: an overview on passive eavesdropping and active attacks,'' {\it IEEE Commun. Mag.}, vol. 53, no. 6, pp. 21-27, Jun. 2015.

\bibitem{XiongLiangLiGong2015}
Q. Xiong, Y. C. Liang, K. H. Li, and Y. Gong, ``An energy-ratio-based approach for detecting pilot spoofing attack in multiple-antenna systems,'' {\it IEEE Trans. Inf. Foren. Sec.}, vol. 10, no. 5, pp. 932-940, May 2015.

\bibitem{Sabharwal2014}
A. Sabharwal, P. Schniter, D. Guo, D. W. Bliss, S. Rangarajan, and R. Wichman, ``In-band full-duplex wireless: challenges and opportunities,'' {\it IEEE J. Sel. Areas Commun.}, vol. 32, no. 9, pp. 1637-1652, Sep. 2014.


\bibitem{Devroye2006}
N. Devroye, P. Mitran, and V. Tarokh, ``Achievable rates in cognitive radio channels,'' {\it IEEE Trans. Inf. Theory}, vol. 52, no. 5, pp. 1813-1827, May 2006.

\bibitem{JovicicViswanath2009}
A. Jovicic and P. Viswanath, ``Cognitive radio: an information-theoretic perspective,'' {\it IEEE Trans. Inf. Theory}, vol. 55, no. 9, pp. 3945-3958, Sep. 2009.

\bibitem{YuLui2006}
W. Yu and R. Lui, ``Dual methods for nonconvex spectrum optimization of multicarrier systems,'' {\it IEEE Trans. Commun.}, vol. 54, no. 7, pp. 1310-1322, Jul. 2006.

\bibitem{LuoZhang2008}
Z.-Q. Luo and S. Zhang, ``Dynamic spectrum management: complexity and duality,'' {\it IEEE J. Sel. Topics Signal Process.}, vol. 2, no. 1, pp. 57-73, Feb. 2008.

\bibitem{BoydVandenberghe2004}
S. Boyd and L. Vandenberghe, {\it Convex Optimization}, Cambidge Univ. Press, 2004.

\bibitem{BoydConvexII}
S. Boyd, Convex optimization II, Stanford, CA, USA. [Online]. Available: {\url{http://www.stanford.edu/class/ee364b/lectures.html}}

\bibitem{Goldsmith1997}
A. Goldsmith and P. P. Varaiya, ``Capacity of fading channels with channel side information,'' {\it IEEE Trans. Inf. Theory}, vol. 43, no. 6, pp. 1986-1992, Nov. 1997.


\bibitem{LoopBackChannel}
H. G. Schantz, ``Near field propagation law \& a novel fundamental limit to antenna gain versus size,'' in {\it IEEE Antennas and Propag.
Society Int. Symposium}, Jul. 2005.


\bibitem{MukherjeeICASSP2012}
A. Mukherjee and A. L. Swindlehurst, ``Detecting passive eavesdroppers in the MIMO wiretap channel,'' in {\it Proc. IEEE ICASSP}, pp. 2809-2812, Mar. 2012.

\bibitem{Goel2008}
S. Goel and R. Negi, ``Guaranteeing secrecy using artificial noise,'' {\it IEEE Trans. Wireless Commun.}, vol. 7, no. 6, pp. 2180-2189, Jun. 2008.


\bibitem{HanZhu2011}
M. H. Manshaei, Q. Zhu, T. Alpcan, T. Basar, and J.-P. Hubaux, ``Game theory meets network security and privacy,'' {\it ACM Comput. Surveys}, vol. 45, no. 3, pp. 25:1-25:39, Jun. 2013.

\bibitem{Zhang2010}
R. Zhang, ``On active learning and supervised transmission of spectrum sharing based cognitive radios by exploiting hidden primary radio feedback,'' {\it IEEE Trans. Commun.}, vol. 58, no. 10, pp. 2960-2970, Oct. 2010.

\bibitem{NoamGoldsmith2013}
Y. Noam and A. Goldsmith, ``The one-bit null space learning algorithm and its convergence,'' {\it IEEE Trans. Signal Process.}, vol. 61, no. 24, pp. 6135-6149, Dec. 2013.

\bibitem{XuZhang2014}
J. Xu and R. Zhang, ``Energy beamforming with one-bit feedback,'' {\it IEEE Trans. Signal Process.}, vol. 62, no. 20, pp. 5370-5381, Oct. 2014.


\bibitem{GopalakrishnanSidiropoulos2014}
B. Gopalakrishnan and N. D. Sidiropoulos, ``Cognitive transmit beamforming from binary CSIT,'' {\it IEEE Trans. Wireless Commun.}, vol. 14, no. 2, pp. 895-906, Feb. 2015.


\end{thebibliography}
\end{document}